\newenvironment{proof}[1][Proof:\\]{\begin{trivlist}
\item[\hskip \labelsep {\bfseries #1}]}{\end{trivlist}$\Box$}
\title{Partial Optimality of  Dual Decomposition\\ for MAP Inference in Pairwise MRFs}
\author{
  Alexander Bauer$^{1,2}$, \hspace*{5pt}Shinichi Nakajima$^{1,2}$,\hspace*{5pt} Nico G{\"o}rnitz$^{2}$,\hspace*{5pt} Klaus-Robert~M{\"u}ller$^{1,2,3,4}$\\
  $^1$Berlin Big Data Center, Berlin, Germany\\
  $^2$Machine Learning Group, Technische Universit{\"a}t Berlin, Berlin, Germany\\
  $^3$Max Planck Institute for Informatics, Saarbr{\"u}cken, Germany\\
  $^4$Department of Brain and Cognitive Engineering, Korea University, Seoul, Korea\\
  \texttt{\{alexander.bauer, nakajima, nico.goernitz, klaus-robert.mueller\}@tu-berlin.de} \\
}
\begin{document}

\maketitle

\begin{abstract}
Markov random fields (MRFs) are a powerful tool
for modelling statistical dependencies for a set
of random variables using a graphical representation.
An important computational problem related to MRFs, called maximum a posteriori (MAP) inference,
is finding a joint variable assignment with the maximal probability.
It is well known that the two popular optimisation techniques for this task,
linear programming (LP) relaxation and dual decomposition (DD), have
a strong connection both providing an optimal solution to the MAP problem
 when a corresponding LP relaxation is tight.
However, less is known about their relationship in the opposite and more realistic case.
In this paper, we explain how the fully integral assignments obtained via DD
partially agree with the optimal fractional assignments via LP relaxation
when the latter is not tight.
In particular, for binary pairwise MRFs the corresponding result
suggests that both methods share the partial optimality property
of their solutions.
\end{abstract}

\section{Introduction}
The framework of graphical models such as Markov random fields (MRFs) \cite{Koller:2009:PGM:1795555, Pearl89, WainwrightJ08}
provides a powerful tool
for modelling statistical dependencies for a set
of random variables using a graphical representation.
It is of a fundamental importance for many practical application areas
including natural language processing, information retrieval, computational biology, and computer vision.
A related computational problem, called maximum a posteriori (MAP) inference, is 
finding a joint variable assignment with the maximal probability.
The vast amount of existing methods for solving the discrete MAP problem (see \cite{KappesAHSNBKKKL15} for an overview)
can be divided roughly into three different groups: methods based on graph-cuts, methods based on message passing,
and polyhedral methods.
The latter group is tightly connected to a popular approach of linear programming (LP) relaxation \cite{Sontag_thesis10, WainwrightJ08, opac-b1094316, opac-b1086766}, which is based
on reformulating the original combinatorial problem  
as an integer linear problem (ILP)
and then relaxing the integrality constraints on the variables. 
Besides providing a lower bound on the optimal value
its popularity is partially\footnote{It also provides an optimal solution for any tree-structured MRF.} due to the fact that
for binary pairwise MRFs with submodular energies it is guaranteed to find an optimal solution \cite{Koller:2009:PGM:1795555, WangY14}.
Unfortunately, for bigger problems with a high number of  variables and constraints it becomes
impractical due to the expensive consumption in memory and computation. 

As an alternative to the LP relaxation,
dual decomposition (DD) \cite{KomodakisPT11, SonGloJaa_optbook, RushC14, Everett_63, Luenberger73}
provides an effective parallelisation framework  for solving the MAP problem.
Furthermore, it has an appealing property that any found solution comes
with a certificate of optimality which allows for efficient evaluation
whether a corresponding variable assignment is primal optimal.
Finally, it is well known that the two optimisation techniques have a strong connection
both providing an optimal solution if the LP relaxation is tight.
However, less is know about their relationship in the opposite and more realistic case.
Instead the main focus of the existing literature is on tightening the standard LP relaxation \cite{KomodakisPT11, abs-1206-3288, Sontag_thesis10}.
In contrast, the aim of this paper is to investigate the connections between the two techniques
(in the original formulation)
if the LP relaxation is not tight.
More precisely, we focus on the following issue.
The main idea of DD is to decompose a given MRF
into different trees (or other subgraphs) on which inference can be performed efficiently
and trying to enforce an agreement on the overlapping variables between different trees to obtain global consistency.
If the LP relaxation is not tight, some trees will provide inconsistent assignments which disagree on the overlapping parts.
Here we analyse the nature of this disagreement and get
the following main results:
\begin{itemize}
\item given an optimal (fractional) solution of the LP relaxation, there always exists an optimal variable assignment via DD which agrees with the integral part of the LP solution
\item for binary pairwise MRFs in a non degenerate case\footnote{By a non degenerate case
we mean the case where the LP relaxation has a unique (fractional) solution. That is, the optimum is attained at a corner and not at an edge or a facet of
a corresponding polytope.}, the unambiguous part among all optimal assignments from different trees in a decomposition
coincides with the integral part of the (fractional) optimal assignment via LP relaxation.
\end{itemize}
We note that the first result holds also for non binary MRFs with arbitrary higher order potentials.
On the other hand, for binary pairwise MRFs it
implies that a corresponding assignment contains
the strongly persistent part\footnote{For binary pairwise MRFs, the integral part of an optimal solution of the LP relaxation is known to be strongly persistent.
That is, any optimal solution of a corresponding MAP problem must agree with this partial integral assignment.} of the LP solution \cite{WainwrightJ08, HammerHS84}.
The second result suggests a strategy how to extract the strongly persistent part
by looking at the intersection of all optimal assignments to the overlapping trees.
In this sense, both methods LP relaxation and DD share a partial optimality property of their solutions. 

Note that a corresponding fractional solution obtained via LP relaxation (for binary pairwise MRFs) is half integral.
That is, every fractional node is equal to $0.5$. This provides no information about the preferences
of the fractional variables in that case preventing use of rounding techniques.
In contrast, DD always provides a fully integral assignment
which inherits the strongly persistent part of the LP relaxation
making DD an appealing optimisation method.

Finally, we argue that depending on the final goal there is a decision to make with respect to the
degree of  a corresponding decomposition.
Usually, the main goal is to find an accurate (integral) assignment to the variables in a MRF. 
In that case a decomposition over spanning trees is more beneficial.
It significantly speeds  up the convergence but most importantly it is straightforward
how to extract an optimal assignment.
On the other hand if we want to extract the strongly persistent part, then a decomposition
over edges is more appropriate. In that case it is straightforward to extract the
unambiguous part, but at the same time it is NP-hard to construct an optimal (and globally consistent) assignment from the individual edges in a decomposition.


\section{Notation and Background}\label{sec:Notation}

\subsection{MAP Inference as an Optimisation Problem}
For a set of $n$ discrete variables $\bfx = \{x_1, ..., x_n\}$ taking values from a finite set $S$
we define the energy of a pairwise MRF factorising over a graph $G = (\mcV, \mcE)$ according to:
\begin{equation}
E(\bfx) = \sum_{i \in \mcV} \theta_i(x_i) + \sum_{(i,j) \in \mcE} \theta_{i,j}(x_i,x_j),
\end{equation}
where the functions $\theta_i(\cdot) \colon S \rightarrow \mathbb{R}$, $\theta_{i,j}(\cdot, \cdot) \colon S \times S \rightarrow \mathbb{R}$ denote the corresponding unary and pairwise potentials, respectively.
The \emph{maximum a posteriori} (MAP) problem, that is, computing an assignment with the highest probability
is equivalent to the problem of finding an assignment which minimises the energy.

Probably the most popular method for solving this problem is based on the linear programming (LP) relaxation technique.
For this purpose, the MAP problem is first represented as an (equivalent) integer linear problem (ILP):
\begin{equation}
\begin{aligned}
& \underset{\bfmu \in \mcX_G}{\text{minimise}}
& & \bftheta^\T \bfmu
\end{aligned}
\label{OP0}
\end{equation}
where $\bfmu$ corresponds to a joint variable assignment $\bfx$ in the standard overcomplete representation \cite{WainwrightJ08}.
That is, $\bftheta$ is a vector with entries $\theta_i(x_i)$ for all $i \in \mcV, x_i \in S$ and  $\theta_{i,j}(x_i,x_j)$ for all $(i,j) \in \mcE$, $x_i, x_j \in S$,
and $\bfmu$ is a binary vector of indicator functions for nodes and edges $\mu_i(x_i), \mu_{i,j}(x_i, x_j) \in \{0, 1\}$, where
$\mu_i(s) = 1 \Leftrightarrow x_i = s$ and $\mu_{i,j}(s_1,s_2) = 1 \Leftrightarrow x_i = s_1 \land x_j = s_2$.
The set $\mcX_G$ corresponds to all valid assignments of a pairwise MRF over a graph $G$ and has the following compact representation:
\begin{equation}
\label{E_X_G}
\mcX_G := \left\{ \bfmu \in \mathbb{R}^d \hspace*{5pt}
\begin{array}{|ll}
 \sum_{x_i} \mu_i(x_i) = 1 & \forall i \in \mcV\\
  \sum_{x_i} \mu_{i,j}(x_i, x_j) = \mu_j(x_j) & \forall (i,j) \in \mcE, \forall x_j \in S\\
  \sum_{x_j} \mu_{i,j}(x_i, x_j) = \mu_i(x_i) & \forall (i,j) \in \mcE, \forall x_i \in S\\
  \mu_i(x_i) \in \{0, 1\} & \forall i \in \mcV, \forall x_i \in S\\
  \mu_{i,j}(x_i, x_j) \in \{0, 1\} &  \forall (i,j) \in \mcE, \forall x_i, x_j \in S
\end{array}
\right\}
\end{equation}
A convex hull of this set, which we denote by $\mcM_G := \textrm{conv }\mcX_G$ plays a special role in the optimisation
and is known as the \emph{marginal polytope} of a corresponding MRF.
Namely, the problem (\ref{OP0}) is equivalent to the one where we replace
the set $\mcX_G$ by its convex hull $\mcM_G$, that is
\begin{equation}
\min_{\bfmu \in \mcX_G} \bftheta^\T \bfmu = \min_{\bfmu \in \mcM_G} \bftheta^\T \bfmu.
\end{equation}

Since finding an optimal solution of the above ILP
or equivalently minimising its linear objective over the marginal polytope
is in general intractable,
we usually consider the following relaxation:
\begin{equation}
\begin{aligned}
& \underset{\bfmu \in L_G}{\text{minimise}}
& & \bftheta^\T \bfmu
\end{aligned}
\label{OP1}
\end{equation}
where we optimise over a bigger set
$L_G \supseteq   \mcM_{G} \supseteq \mcX_{G}$ called the \emph{local consistency polytope} of a MRF over a graph $G$,
which results from relaxing the integrality constraints $\mu_i(x_i), \mu_{i,j}(x_i, x_j) \in \{0, 1\}$ in the definition of $\mcX_G$
by allowing the corresponding variables to take all real values in the interval $[0, 1]$.
That is,
\begin{equation}
L_G := \left\{ \bfmu \in \mathbb{R}^d \hspace*{5pt}
\begin{array}{|ll}
 \sum_{x_i} \mu_i(x_i) = 1 & \forall i \in \mcV\\
  \sum_{x_i} \mu_{i,j}(x_i, x_j) = \mu_j(x_j) & \forall (i,j) \in \mcE, \forall x_j\\
  \sum_{x_j} \mu_{i,j}(x_i, x_j) = \mu_i(x_i) & \forall (i,j) \in \mcE, \forall x_i\\
  \mu_{i,j}(x_i, x_j) \geqslant 0 &  \forall (i,j) \in \mcE, \forall x_i, x_j \in S
\end{array}
\right\}
\end{equation}
Note that the non-negativity of the unary variables $\mu(x_i) \geqslant 0$ implicitly
follows from the combination of the agreement constraints between node and edge variables and the non-negativity of the latter.

\subsection{Optimisation via Dual Decomposition}
We now briefly review the DD framework for MAP inference in (pairwise) MRFs \cite{KomodakisPT11}.
The main idea is to decompose the original intractable optimisation problem (OP) in (\ref{OP0}) over a graph $G$ into a set
of tractable inference problems over subtrees $\{\mcT_j\}_{j=1}^m$, $\mcT_j \subseteq G$,
which are coupled by a set of agreement constraints to ensure the consistency.
That is, each of the individual subproblems $j \in \{1, ..., m\}$ corresponds to the MAP inference on a subtree $\mcT_j = (\mcV_j, \mcE_j)$
of the original MRF.
More precisely, we define the following OP
\begin{equation}
\begin{aligned}
& \underset{\bfmu^1 \in \mcX_{\mcT_1}, ..., \bfmu^m \in \mcX_{\mcT_m}, \bfnu}{\text{minimise}}
& & \sum_{j=1}^m {\bftheta^j}^\T \bfmu^j\\
& \text{subject to}
& & \mu^j_i(x_i) = \nu_i(x_i) \hspace*{10pt} \forall j \in \{1, ..., m\}, \forall i \in \mcV_j, \forall x_i \in S\\
\end{aligned}
\label{OP2}
\end{equation}
where each vector $\bfmu^j$ denotes the variables of a local subproblem with respect to $\mcT_j$, and
$\bfnu$ is a set of global variables $\nu_i(x_i)$ on which the variables $\mu_i^j(x_i)$ of the (overlapping) subproblems must agree.
We can choose any decomposition with the only condition that the corresponding
trees together must cover all the nodes and edges of $G$, that is, $\mcV = \bigcup_{j=1}^m \mcV_j$ and $\mcE = \bigcup_{j=1}^m \mcE_j$,
as well as $\bftheta^\T \bfmu = \sum_{j=1}^m \bftheta_j^\T \bfmu^j$.
Note that OP in (\ref{OP2}) is equivalent to the ILP in (\ref{OP0}).
A corresponding LP relaxation given by
\begin{equation}
\begin{aligned}
& \underset{\bfmu^1 \in L_{\mcT_1}, ..., \bfmu^m \in L_{\mcT_m}, \bfnu}{\text{minimise}}
& & \sum_{j=1}^m {\bftheta^j}^\T \bfmu^j\\
& \text{subject to}
& & \mu^j_i(x_i) = \nu_i(x_i) \hspace*{10pt} \forall j \in \{1, ..., m\}, \forall i \in \mcV_j, \forall x_i \in S\\
\end{aligned}
\label{OP3}
\end{equation}
is equivalent to
the OP in (\ref{OP1})
in the sense that both have the same optimal value and the same optimal solution set.

In the corresponding dual problems the goal is
to maximise the
dual function of the OPs in (\ref{OP2}) and (\ref{OP3})
according to
\begin{equation}
\label{D2}
\begin{aligned}
& \underset{\bfu \in \mcU}{\text{maximise}}
& & g_{\ref{OP2}}(\bfu), \hspace*{2pt} \text{where} \hspace*{2pt} g_{\ref{OP2}}(\bfu) = \underset{\bfmu^1 \in \mcX_{\mcT_1}, ..., \bfmu^m \in \mcX_{\mcT_m}}{\inf} \left\{ \sum_{j=1}^m (\bftheta^j + \bfu^j)^\T \bfmu^j\right\}\\
\end{aligned}
\end{equation}
and
\begin{equation}
\label{D3}
\begin{aligned}
& \underset{\bfu \in \mcU}{\text{maximise}}
& & g_{\ref{OP3}}(\bfu), \hspace*{2pt} \text{where} \hspace*{2pt} g_{\ref{OP3}}(\bfu) = \underset{\bfmu^1 \in L_{\mcT_1}, ..., \bfmu^m \in L_{\mcT_m}}{\inf} \left\{ \sum_{j=1}^m (\bftheta^j + \bfu^j)^\T \bfmu^j\right\}\\
\end{aligned}
\end{equation}
respectively,
over a restricted set of dual values
\begin{equation}
\mcU := \left\{\bfu \colon \sum_{j \colon i \in \mcV_j} \bfu^j_i(x_i) = 0, i \in \{1, ..., n\}, x_i \in S \right\}
\end{equation}
We here overload the notation in the following sense.
The dual variables have the following form $\bfu = (\bfu^1, ..., \bfu^m)$, $\bfu^j = (..., u^j_i(x_i), ...)$.
Therefore, since we ignore the edges, the number of dual variables $\bfu^j$  is smaller than
the dimensionality of $\bfmu^j$ (or $\bftheta^j$).
However, for the algebraic operations (e.g.\ inner product) to make sense we implicitly assume that the vector $\bfu^j$ (if required) is appropriately filled with zeros to get the same dimensionality as $\bfmu^j$. 
A derivation of the above dual problems can be found in \cite{KomodakisPT11}.
There are different ways to solve a corresponding dual problem.
The most popular is a subgradient method for convex non differentiable objectives. Alternatively,
we could use a variant of block coordinate descent or cutting plane algorithm.

\section{Connections between LP Relaxation and DD}\label{sec:Main}
The facts summarised in Subsection \ref{s_1} are mainly known. We provide
them for the sake of completeness.
In Subsection \ref{s_2} we present new insights in the connections between the two optimisation techniques.
\subsection{Case 1: LP Relaxation yields an Integral Solution}\label{s_1}
It is well known that if the LP relaxation is tight, both the LP relaxation and DD provide an
integral optimal solution to the MAP problem.
From a different perspective, this means
that strong duality holds for the OP in (\ref{OP2}).
That is, there is zero duality gap between optimal values of OPs in (\ref{OP2}) and (\ref{D2}).
Equivalently, it implies the existence of consistent optimal assignments to subtrees according to a chosen decomposition.
We summarise these insights in the following lemma.
\begin{lemma}
\label{L_opt}
The following claims are equivalent:
\begin{flushleft}
(i) LP relaxation in (\ref{OP1}) has an integral solution\\
(ii) strong duality holds for problem (\ref{OP2})\\
(iii) $\bar{\mu}^{j_1}_i(x_i) = \bar{\mu}^{j_2}_i(x_i) \hspace*{10pt} \forall j_1, j_2 \in \{1, ..., m\}, i \in \mcV_{j_1} \cap \mcV_{j_2}, x_i \in S$
\end{flushleft}
where $\bar{\bfmu} = (\bar{\bfmu}^1, ..., \bar{\bfmu}^m)$ is a (not necessarily unique) minimiser of
the Lagrangian $\mcL(\cdot, ..., \cdot, \bfu^*)$ for OP in (\ref{OP2}) and $\bfu^*$ is a dual optimal.
\end{lemma}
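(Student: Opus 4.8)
The plan is to establish the cycle of implications (i) $\Rightarrow$ (ii) $\Rightarrow$ (iii) $\Rightarrow$ (i), exploiting the fact that (\ref{OP2}) is an exact reformulation of the ILP (\ref{OP0}) and that its Lagrangian dual is (\ref{D2}), while the LP relaxation (\ref{OP1}) equals (\ref{OP3}) whose dual is (\ref{D3}). A key observation I would record up front is that the two dual functions coincide, $g_{\ref{OP2}} \equiv g_{\ref{OP3}}$: for fixed $\bfu$, minimising the linear objective $\sum_j (\bftheta^j+\bfu^j)^\T\bfmu^j$ over $\mcX_{\mcT_j}$ gives the same value as over $L_{\mcT_j} = \mathrm{conv}\,\mcX_{\mcT_j}$, since $L_{\mcT_j}$ is the local polytope of a \emph{tree} and hence is integral (this is the standard fact that tree-structured relaxations are tight). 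Consequently both OPs (\ref{OP2}) and (\ref{OP3}) have the same optimal dual value, and strong duality for (\ref{OP2}) is equivalent to saying that the common dual optimum equals the optimal value of (\ref{OP1}).

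For (i) $\Rightarrow$ (ii): if (\ref{OP1}) has an integral optimal solution $\bfmu^\star \in \mcX_G$, then splitting $\bfmu^\star$ across the trees gives a feasible point of (\ref{OP2}) with the same objective value, so $\mathrm{val}(\ref{OP2}) \le \mathrm{val}(\ref{OP1}) = \mathrm{val}(\ref{OP3}) = \mathrm{val}(\ref{D3}) = \mathrm{val}(\ref{D2})$; since weak duality gives the reverse inequality, strong duality holds. For (ii) $\Rightarrow$ (iii): take $\bfu^\star$ dual optimal and let $\bar\bfmu$ minimise the Lagrangian $\mcL(\cdot,\bfu^\star)$, so each $\bar\bfmu^j$ minimises $(\bftheta^j+\bfu^{j\star})^\T\bfmu^j$ over $\mcX_{\mcT_j}$; strong duality together with complementary slackness (or: a primal-dual optimal pair must satisfy the coupling constraints) forces the $\bar\bfmu^j$ to agree on shared nodes, which is exactly (iii). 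For (iii) $\Rightarrow$ (i): if the $\bar\bfmu^j$ agree on overlaps, glue them into a single vector $\bfmu \in \mcX_G$; by the decomposition $\bftheta^\T\bfmu = \sum_j \bftheta^{j\T}\bar\bfmu^j = g_{\ref{OP2}}(\bfu^\star) = \mathrm{val}(\ref{D2}) = \mathrm{val}(\ref{OP1})$, so $\bfmu$ is an integral optimal solution of the LP relaxation.

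The main obstacle I anticipate is the precise argument in (ii) $\Rightarrow$ (iii) — namely, why a Lagrangian minimiser $\bar\bfmu$ at $\bfu^\star$ must \emph{itself} satisfy the coupling constraints rather than merely some other optimal solution doing so. The clean way is to note that under strong duality the value $g_{\ref{OP2}}(\bfu^\star)$ equals $\mathrm{val}(\ref{OP2})$; any $\bar\bfmu$ achieving the infimum in $g_{\ref{OP2}}(\bfu^\star)$ has $\sum_j\bftheta^{j\T}\bar\bfmu^j \le \sum_j(\bftheta^j+\bfu^{j\star})^\T\bar\bfmu^j = \mathrm{val}(\ref{OP2})$ using $\bfu^\star\in\mcU$ only when the agreement constraints hold — so one must instead argue that if $\bar\bfmu$ violated agreement, the Lagrangian term $\sum_j \bfu^{j\star\T}\bar\bfmu^j$ would be strictly negative at the maximiser, contradicting optimality of $\bfu^\star$ via a first-order/subgradient condition on the concave dual $g_{\ref{OP2}}$. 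I would handle this by invoking that $\mathbf{0}$ is a subgradient of $-g_{\ref{OP2}}$ at $\bfu^\star$, which unpacks to the existence of a Lagrangian minimiser satisfying agreement, and then noting that for a linear-objective problem every Lagrangian minimiser at a given $\bfu^\star$ attains the same objective, closing the gap. A remark that the equivalence is essentially a restatement of LP duality for (\ref{OP1}) via its decomposed dual (\ref{D2}) would be worth including for intuition.
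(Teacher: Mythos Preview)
The paper does not actually supply a proof of this lemma: it is stated in Section~\ref{s_1} as a summary of ``mainly known'' facts and is not revisited in the supplementary material. So there is no paper proof to compare against, and I will simply assess your argument on its merits.

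Your cycle (i)$\Rightarrow$(ii)$\Rightarrow$(iii)$\Rightarrow$(i) is the natural one and the supporting ingredients---$g_{\ref{OP2}}\equiv g_{\ref{OP3}}$ from integrality of the tree local polytopes, strong LP duality for (\ref{OP3}), and the gluing argument for (iii)$\Rightarrow$(i)---are all correct. The step (i)$\Rightarrow$(ii) is fine once you note $\mathrm{val}(\ref{OP2})=\mathrm{val}(\ref{OP0})$ and that an integral LP optimum forces $\mathrm{val}(\ref{OP1})=\mathrm{val}(\ref{OP0})$. The step (iii)$\Rightarrow$(i) is clean: agreement plus $\bfu^*\in\mcU$ makes the dual term vanish, so the glued $\bfmu\in\mcX_G\subseteq L_G$ attains the LP optimum $g_{\ref{OP2}}(\bfu^*)=\mathrm{val}(\ref{OP1})$.

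Your anticipated obstacle in (ii)$\Rightarrow$(iii) is largely self-inflicted. Statement (iii) must be read \emph{existentially}---``there is a Lagrangian minimiser that agrees''---since otherwise the equivalence fails already on a three-node path with two global optima. With that reading, the simplest argument is not the subgradient/first-order route you sketch but the direct one: take any primal optimal $\bfmu^\star$ of (\ref{OP2}); it is feasible, hence satisfies agreement, and by strong duality
\[
\sum_j(\bftheta^j+\bfu^{*j})^\T\bfmu^{\star j}=\sum_j\bftheta^{j\T}\bfmu^{\star j}=\mathrm{val}(\ref{OP2})=g_{\ref{OP2}}(\bfu^*),
\]
so $\bfmu^\star$ is itself a Lagrangian minimiser. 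This replaces your last paragraph entirely. Your remark that ``the Lagrangian term would be strictly negative'' if agreement failed is not correct as stated (the sign of $\sum_j\bfu^{*j\T}\bar\bfmu^j$ is uncontrolled without agreement), so drop that line of reasoning.
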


\subsection{Case 2: LP Relaxation yields a Fractional Solution}\label{s_2}
If the LP relaxation is not tight, a corresponding optimal solution $\bfmu^*$ will have fractional components.
Given such a fractional solution, we denote by $\mcI \subseteq \{1, ..., n\}$ the indices of the variables $x_1, ..., x_n$, which have been assigned an integral value in $\bfmu^*$, and by $\mcF \subseteq \{1, ..., n\}$ the remaining set 
of indices corresponding to the fractional part.
Formally,  $i \in \mcI \Leftrightarrow \forall x_i \in S \colon \mu_i^*(x_i) \in \{0,1\}$
and $\mcF = \{1, ..., n\} \setminus \mcI$.
In contrast to LP relaxation, assignments produced via DD are fully integral.
Given a tree decomposition of a corresponding MRF, the optimal assignments to different subtrees, however,
will partially disagree on the overlapping parts.
Here we denote by $\mcA \subseteq \{1, ..., n\}$ the indices of the variables $x_1, ..., x_n$
with unique assignment. Formally, $i \in \mcA$, if for every tree which contains $x_i$ and every optimal assignment to that tree,
$x_i$ has the same value.
Similarly we denote by $\mcD \subseteq \{1, ..., n\}$ the set of indices for which at least
two different trees disagree on their optimal assignments, that is, $\mcD = \{1, ..., n\} \setminus \mcA$.
We now provide a formal analysis of the relationship between 
the sets $\mcI$ and $\mcA$, or equivalently between $\mcF$ and $\mcD$.

\begin{theorem}
Let $\bfmu^*$ be an optimal fractional solution of the LP relaxation (\ref{OP1}),
$\bar{\bfu}$ a dual optimal for OP in (\ref{OP2}), and $\mcL \colon \mcX_{\mcT_1} \times \mcX_{\mcT_m} \times \mcU \rightarrow \mathbb{R} $ a corresponding Lagrangian.
There always exists a set of minimisers $\bar{\bfmu}^1 \in \mcX_{\mcT_2}, ..., \bar{\bfmu}^m \in \mcX_{\mcT_m}$
of the Lagrangian $\mcL(\cdot, ...,  \cdot, \bar{\bfu})$
which agree with the integral part $\mcI$ of $\bfmu^*$, that is,
\begin{center}
$\forall j \in \{1, ..., m\}, i \in \mcI \cap \mcV_j, x_i \in S \colon \hspace*{10pt} \bar{\mu}^j_i(x_i) = \mu^*_i(x_i)$,
\end{center}
for short $\bar{\bfmu}^j_{\mcI} = \bfmu^*_{\mcI}$.
\label{T_1}
\end{theorem}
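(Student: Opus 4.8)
\textbf{Proof plan.} The plan is to reduce the claim to a statement about each tree $\mcT_j$ separately, exploiting the classical fact that for a tree the local consistency polytope coincides with the marginal polytope, so $L_{\mcT_j} = \mcM_{\mcT_j}$ and the vertices of $L_{\mcT_j}$ are exactly the integral points $\mcX_{\mcT_j}$. First I would pass from $\bfmu^*$ to its decomposed form: let $\bfmu^{*,j}$ be the restriction of $\bfmu^*$ to the nodes and edges of $\mcT_j$ and set $\nu^*_i(x_i) = \mu^*_i(x_i)$. By the stated equivalence of (\ref{OP1}) and (\ref{OP3}), the tuple $(\bfmu^{*,1},\dots,\bfmu^{*,m},\bfnu^*)$ is primal optimal for the relaxed decomposed problem (\ref{OP3}). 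Since minimising a linear function over $\mcX_{\mcT_j}$ equals minimising it over $L_{\mcT_j}$, the dual functions satisfy $g_{\ref{OP2}} = g_{\ref{OP3}}$, so $\bar{\bfu}$ is also optimal for (\ref{D3}), and by LP strong duality its value equals the optimum of (\ref{OP3}). A short computation using $\bfmu^{*,j}_i(x_i) = \mu^*_i(x_i)$ and $\bar{\bfu}\in\mcU$ shows $\sum_j (\bar{\bfu}^j)^\T\bfmu^{*,j} = 0$, so $\sum_j (\bftheta^j+\bar{\bfu}^j)^\T\bfmu^{*,j}$ equals the optimal dual value; hence $(\bfmu^{*,1},\dots,\bfmu^{*,m})$ minimises the relaxed Lagrangian $\sum_j(\bftheta^j+\bar{\bfu}^j)^\T\bfmu^j$ over $L_{\mcT_1}\times\dots\times L_{\mcT_m}$, and since the objective decouples, each $\bfmu^{*,j}$ minimises $(\bftheta^j+\bar{\bfu}^j)^\T\bfmu^j$ over $L_{\mcT_j}$.

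Next I would construct the desired integral minimisers face by face. For each $j$ let $F_j\subseteq L_{\mcT_j}$ be the optimal face of the linear program $\min\{(\bftheta^j+\bar{\bfu}^j)^\T\bfmu^j : \bfmu^j\in L_{\mcT_j}\}$; by the previous step $\bfmu^{*,j}\in F_j$, so $F_j \neq \emptyset$. For $i\in\mcI\cap\mcV_j$ write $s_i$ for the value with $\mu^*_i(s_i)=1$. The inequalities $\mu_i(x_i)\geqslant 0$ are valid for $L_{\mcT_j}$ (they are implied by the agreement and edge non-negativity constraints, as noted after the definition of $L_G$) and are tight at $\bfmu^{*,j}$ for every $i\in\mcI\cap\mcV_j$ and every $x_i\neq s_i$; hence $L'_{\mcT_j} := \{\bfmu^j\in L_{\mcT_j} : \mu^j_i(x_i)=0 \ \forall\, i\in\mcI\cap\mcV_j,\ x_i\neq s_i\}$ — equivalently, the points of $L_{\mcT_j}$ agreeing with $\bfmu^*$ on $\mcI$, since $\mu^j_i(s_i)=1$ then follows from normalisation — is a face of $L_{\mcT_j}$ containing $\bfmu^{*,j}$. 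The intersection $F_j\cap L'_{\mcT_j}$ is an intersection of two faces of the polytope $L_{\mcT_j}$, hence again a face of $L_{\mcT_j}$, and it is non-empty because $\bfmu^{*,j}$ lies in both. Since $L_{\mcT_j}=\mcM_{\mcT_j}$ is an integral polytope, every vertex of this face is integral, i.e.\ lies in $\mcX_{\mcT_j}$; let $\bar{\bfmu}^j$ be any such vertex.

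It then remains to check that $(\bar{\bfmu}^1,\dots,\bar{\bfmu}^m)$ has the required properties. By construction $\bar{\bfmu}^j\in\mcX_{\mcT_j}$ and $\bar{\bfmu}^j_{\mcI}=\bfmu^*_{\mcI}$, since $\bar{\bfmu}^j\in L'_{\mcT_j}$. Moreover $\bar{\bfmu}^j\in F_j$, so it minimises $(\bftheta^j+\bar{\bfu}^j)^\T\bfmu^j$ over $L_{\mcT_j}$ and a fortiori over $\mcX_{\mcT_j}$; summing over $j$ shows $(\bar{\bfmu}^1,\dots,\bar{\bfmu}^m)$ minimises the Lagrangian $\mcL(\cdot,\dots,\cdot,\bar{\bfu})$, which is exactly the assertion. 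I expect the main obstacle to be the middle step: recognising that fixing the integral node marginals carves out a \emph{face} of $L_{\mcT_j}$ (which hinges on $\mu_i(x_i)\geqslant 0$ being a valid, if implied, inequality of the local polytope) and then invoking integrality of the tree local polytope. An alternative would be to argue directly that conditioning a tree on deterministically assigned nodes yields again the local polytope of a smaller forest, which is still integral, but the face-of-a-face formulation is cleaner and avoids bookkeeping with the induced potentials.
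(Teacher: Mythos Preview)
Your proposal is correct and follows the same overall architecture as the paper's proof: show that $g_{\ref{OP2}}=g_{\ref{OP3}}$, deduce that the tree restrictions $\bfmu^{*,j}$ minimise each subproblem's reparametrised objective over $L_{\mcT_j}$, and then use integrality of $L_{\mcT_j}$ for trees to extract integral minimisers that still agree with $\bfmu^*$ on $\mcI$.

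The only real difference is how the last step is packaged. The paper isolates it as a separate lemma showing
\[
\textrm{conv }\mcX_{\mcT} \cap \{\bfmu : \bfmu_{\mcI} = \bfmu^*_\mcI\} \;=\; \textrm{conv }\{\bfmu \in \mcX_{\mcT} : \bfmu_{\mcI} = \bfmu^*_{\mcI}\},
\]
proved by a bare-hands convex-combination argument (any convex combination averaging to a $0/1$ value on a coordinate must be constant there), and then invokes ``linear over a polytope attains its minimum at a vertex''. You instead argue polyhedrally: the constraints $\mu_i(x_i)=0$ for $x_i\neq s_i$ cut out a face $L'_{\mcT_j}$, the optimal set $F_j$ is a face, their intersection is a non-empty face, and vertices of faces of an integral polytope are integral. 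These are two phrasings of the same fact; your face-of-a-face version is the more idiomatic polyhedral formulation and makes the role of the implied inequalities $\mu_i(x_i)\geqslant 0$ explicit, while the paper's version is slightly more self-contained. Your direct computation $\sum_j(\bar{\bfu}^j)^\T\bfmu^{*,j}=0$ via $\bar{\bfu}\in\mcU$ is also a clean replacement for the paper's one-line appeal to strong duality for (\ref{OP3}).
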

The result in the above theorem has the most intuitive interpretation in the case of a decomposition into spanning trees,
that is, when every tree $\mcT_j$ covers all the nodes ($\mcV_j = \mcV$) of the original graph.
In that case Theorem \ref{T_1} implies  that
for any optimal solution of the LP relaxation and a corresponding assignment $\bfx^*$ with an integral part $\mcI$, there exist
optimal assignments $\bfx^1, ..., \bfx^m$ (from a dual solution $\bar{\bfu}$) for the different spanning trees which agree on a set of nodes $\mcA$
with $x_i^j = x_i^*$ for all $i \in \mcI \subseteq \mcA$.
An immediate question arising is whether the two sets $\mcI$ and $\mcA$ are equal.
The answer is no.
In general, the two sets are not the same
and $\mcI$ will usually be a proper subset of $\mcA$.

Theorem \ref{T_1} motivates the following simple heuristic for getting an approximate integral solution,
which is especially suitable for a decomposition over spanning trees when using subgradient optimisation.
Namely, we can consider optimal assignments for every spanning tree and choose the best according
to the value of the primal objective. Increasing the number of trees in a decomposition also
increases the chance of finding a good assignment.
Furthermore, during the optimisation we can repeat this for the intermediate results after each iteration of
a corresponding optimisation algorithm
saving the currently best solution. Obviously, this only can improve the quality of the resulting assignment.
Note that it is the usual praxis with subgradient methods to
save the intermediate results since the objective is not guaranteed to improve in every step but can even get worse.
In that sense, the above heuristic does not impose additional computational cost.

Finally, it turns out that in a non degenerate case, where LP relaxation has a unique solution, the relationship $\mcI \subseteq \mcA$ (and therefore $\mcD \subseteq \mcF$) holds for all minimiser of a corresponding Lagrangian $\mcL(\cdot, ..., \cdot,\bar{\bfu})$
supported by the following theorem.

\begin{theorem}
\label{T_2}
Let $\bfmu^*$ be a unique optimal solution of the LP relaxation (\ref{OP1}),
$\bar{\bfu}$ a dual optimal for OP in  (\ref{OP2}), and $\mcL \colon \mcX_{\mcT_1} \times \mcX_{\mcT_m} \times \mcU \rightarrow \mathbb{R} $ a corresponding Lagrangian.
Each set of minimisers $\bar{\bfmu}^1 \in \mcX_{\mcT_2}, ..., \bar{\bfmu}^m \in \mcX_{\mcT_m}$ of the Lagrangian $\mcL(\cdot, ..., \cdot,\bar{\bfu})$
agrees with the integral part $\mcI$ of $\bfmu^*$, that is,
\begin{center}
$\forall j \in \{1, ..., m\}, i \in \mcI \cap \mcV_j, x_i \in S \colon \hspace*{10pt} \bar{\mu}^j_i(x_i) = \mu^*_i(x_i)$
\end{center}
for short $\bar{\bfmu}^j_{\mcI} = \bfmu^*_{\mcI}$.
\end{theorem}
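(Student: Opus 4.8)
The plan is to leverage Theorem \ref{T_1} together with the assumed uniqueness of $\bfmu^*$, arguing by contradiction. Suppose some set of Lagrangian minimisers $\bar{\bfmu}^1, \dots, \bar{\bfmu}^m$ of $\mcL(\cdot, \dots, \cdot, \bar{\bfu})$ violates the claim, i.e.\ there is a tree index $j_0$ and an index $i_0 \in \mcI \cap \mcV_{j_0}$ with $\bar{\bfmu}^{j_0}_{i_0} \neq \bfmu^*_{i_0}$. I would then combine this ``bad'' minimiser with the ``good'' minimisers guaranteed by Theorem \ref{T_1}, using convexity of the Lagrangian minimiser set. Concretely, for a dual optimal $\bar{\bfu}$ the set of minimisers of $\mcL(\cdot, \dots, \cdot, \bar{\bfu})$ over $\mcX_{\mcT_1} \times \dots \times \mcX_{\mcT_m}$ coincides (after taking convex hulls, i.e.\ passing to $L_{\mcT_j}$) with the set of minimisers over $L_{\mcT_1} \times \dots \times L_{\mcT_m}$, because the linear objective attains its min over a polytope on the face spanned by the optimal vertices. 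So any convex combination of the good and bad solutions is again a minimiser of the relaxed Lagrangian.

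The key step is the reconstruction of a primal LP-optimal point from such a Lagrangian minimiser. By strong duality for the relaxed problem (\ref{OP3}) / (\ref{D3}) and the equivalence of (\ref{OP3}) with (\ref{OP1}), any tuple $(\hat{\bfmu}^1, \dots, \hat{\bfmu}^m) \in L_{\mcT_1} \times \dots \times L_{\mcT_m}$ that minimises $\mcL(\cdot, \dots, \cdot, \bar{\bfu})$ \emph{and} satisfies the agreement constraints $\hat{\mu}^j_i(x_i) = \nu_i(x_i)$ yields, via $\bfmu \mapsto \bfnu$, an optimal solution $\hat{\bfmu}^*$ of the LP relaxation (\ref{OP1}). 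The good minimisers from Theorem \ref{T_1} are consistent on the integral coordinates with $\bfmu^*$; I would take an appropriate convex combination $\lambda (\text{bad}) + (1-\lambda)(\text{good})$ that (a) remains a Lagrangian minimiser, (b) can be ``patched'' on the disagreeing coordinates to satisfy the global agreement constraints — or alternatively, restrict attention to coordinate $i_0$ and show directly that the $\nu_{i_0}$ extracted from the good solutions forces $\mu^*_{i_0}$, while the bad solution would permit a different consistent extension, producing a second LP optimum distinct from $\bfmu^*$ on coordinate $i_0$. This contradicts uniqueness.

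The main obstacle, and the step needing the most care, is bridging from a single ``bad'' \emph{vertex} minimiser of the per-tree problems to a \emph{globally agreement-consistent} tuple that is still Lagrangian-optimal, so that it legitimately reconstructs an LP solution. A naive convex combination of per-tree solutions need not respect the coupling constraints. I expect the right technique is: since the minimiser set of each relaxed per-tree Lagrangian is a face (hence a polytope), and since the agreement-consistent Lagrangian minimisers are precisely the images of LP optima, one shows that the projection onto the $i_0$-coordinates of the full minimiser set is exactly $\{\mu^*_{i_0}\}$ by uniqueness; then any minimiser whose $i_0$-coordinate deviates cannot be extended to an agreement-consistent one, yet by complementary slackness every per-tree vertex minimiser \emph{does} extend to an agreement-consistent Lagrangian minimiser (this is where strong duality of the relaxed problem is essential) — a contradiction. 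I would also need the elementary fact, used implicitly, that for binary/general MRFs the integral coordinates of $\bfmu^*$ are exactly those on which all agreement-consistent Lagrangian minimisers coincide, which follows from Lemma \ref{L_opt} applied to the reduced problem on $\mcI$.
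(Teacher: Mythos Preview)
Your high-level strategy---assume a ``bad'' per-tree minimiser at some $i_0\in\mcI$, then manufacture a second optimum of the LP relaxation to contradict uniqueness---is exactly the paper's. The gap is precisely at the step you yourself flag as ``the main obstacle'': turning the bad per-tree minimiser into a \emph{globally agreement-consistent} tuple of Lagrangian minimisers (equivalently, into a feasible point of $L_G$) that differs from $\bfmu^*$.

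Your proposed resolution, ``by complementary slackness every per-tree vertex minimiser does extend to an agreement-consistent Lagrangian minimiser,'' is false. Strong duality for (\ref{OP3}) gives only the forward direction: every primal optimum of (\ref{OP3}) is a minimiser of $\mcL(\cdot,\dots,\cdot,\bar{\bfu})$. It does \emph{not} say that every per-tree Lagrangian-optimal point is the restriction of some primal optimum. In fact, by your own uniqueness hypothesis the agreement-consistent minimisers of the relaxed Lagrangian form the singleton $(\bfmu^*|_{\mcT_1},\dots,\bfmu^*|_{\mcT_m})$, so your step ``it can be extended'' contradicts your step ``it cannot be extended'' only because the former is wrong, not because the bad vertex cannot exist. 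The alternative ``patching'' you mention is the entire content of the argument, and you have not supplied it.

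What the paper actually does is construct the second LP optimum by an explicit relabelling propagation. It first averages the good and bad minimisers on the offending subproblem, putting the value $0.5$ at node $i_0$; it then walks outward through adjacent edges, using the \emph{weak tree agreement} property of dual optima together with the half-integral structure of extreme points of $L_G$ (Lemma~\ref{C_1} and Lemma~\ref{L_frac_sol}) to produce, edge by edge, new optimal edge-assignments that are consistent with the freshly assigned $0.5$'s. A short case analysis (integral--integral, integral--fractional, and previously relabelled neighbours) shows the procedure only ever flips nodes that were integral in $\bfmu^*$ and terminates with a globally consistent $\tilde{\bfmu}\in L_G$ that attains the LP optimum yet differs from $\bfmu^*$ at $i_0$. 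This construction is specific to the \emph{binary pairwise} setting---half-integrality of vertices and the trick $\hat{\mu}=1-\bar{\mu}$ on fractional coordinates are what make the averaging and the case analysis work---whereas your sketch never invokes that structure, which is a further sign that a purely generic convexity/duality argument cannot close the gap.
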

In particular,
for binary pairwise MRFs this result 
implies that each assignment obtained via DD is partially  optimal
in a sense that it
always contains the strongly persistent part of the (fractional) solution of the LP relaxation.
Provided the fractional part is small,
it suggests
that the obtained assignments
will often have a low energy close to the optimum even if the LP relaxation is not tight.
This fact and the possibility of a parallel computation renders the dual decomposition
a practical tool for the MAP inference upon the LP relaxation.
We also note that the property $\mcI \subseteq \mcA$ in Theorem \ref{T_1} still holds for non binary MRFs with arbitrary higher order cliques,
but $\mcI$ is not guaranteed to be strongly persistent anymore.
In the following we build on an additional lemma.
\begin{lemma} 
Assume the setting of Theorem \ref{T_1}.
For every subproblem $j \in \{1, ..., m\}$ over a tree $\mcT_j$
there are minimisers $\bar{\bfmu}^j, \hat{\bfmu}^j \in \mcX_{\mcT_j}$,
where
\begin{equation}
\label{E_average}
\mu^*_i(x_i) = \frac{1}{2}(\bar{\mu}^j_i(x_i) + \hat{\mu}^j_i(x_i))
\end{equation}
holds for all $i \in \mcV_j, x_i \in S$.
\label{C_1}
\end{lemma}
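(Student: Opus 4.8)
The plan is to localise the statement to a single tree $\mcT_j$ and then, on that tree, combine half-integrality of the LP optimum with the acyclicity of $\mcT_j$. Fix $j$ and write $\bfmu^{*j}$ for the restriction of $\bfmu^*$ to the nodes and edges of $\mcT_j$; since the defining constraints of $L_{\mcT_j}$ are a subset of those of $L_G$, we have $\bfmu^{*j}\in L_{\mcT_j}$. Because $\mcT_j$ is a tree, $L_{\mcT_j}=\mcM_{\mcT_j}=\mathrm{conv}\,\mcX_{\mcT_j}$, so minimising a linear function over $\mcX_{\mcT_j}$, over $\mcM_{\mcT_j}$ and over $L_{\mcT_j}$ always gives the same value; hence $g_{\ref{OP2}}\equiv g_{\ref{OP3}}$ and $\bar\bfu$ is dual optimal also for the linear program (\ref{OP3}). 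Problem (\ref{OP3}) is an LP equivalent to (\ref{OP1}), so strong duality holds, and the optimal primal point (the lift of $\bfmu^*$ obtained by setting $\nu_i(x_i)=\mu^*_i(x_i)$) minimises the partial Lagrangian at $\bar\bfu$; as that objective separates over $j$, each $\bfmu^{*j}$ minimises $(\bftheta^j+\bar\bfu^j)^\T\bfmu^j$ over $L_{\mcT_j}$. Equivalently, $\bfmu^{*j}$ lies on the optimal face of $\mcM_{\mcT_j}$ and therefore decomposes as a convex combination $\bfmu^{*j}=\sum_\ell\lambda_\ell\,\bfy^\ell$ with $\lambda_\ell>0$ of integral minimisers $\bfy^\ell\in\mcX_{\mcT_j}$ of the Lagrangian $\mcL(\cdot,\dots,\cdot,\bar\bfu)$ restricted to $\mcT_j$.

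Next I would exploit half-integrality: for binary pairwise MRFs we may take $\bfmu^*$ half-integral, so on $\mcT_j$ every node marginal lies in $\{0,\tfrac12,1\}$, a node $i$ is fractional ($i\in\mcF$) iff $\mu^*_i(0)=\mu^*_i(1)=\tfrac12$, and for $i\in\mcI$ let $s_i$ denote the value with $\mu^*_i(s_i)=1$. Inspecting the coordinates of $\bfmu^{*j}$ that vanish in the convex decomposition pins down every $\bfy^\ell$: if $i\in\mcI\cap\mcV_j$ then $y^\ell_i=s_i$; and on an edge of $\mcT_j$ joining two fractional nodes the half-integral marginal $\mu^*_{i,k}$ is supported either on $\{(0,0),(1,1)\}$ or on $\{(0,1),(1,0)\}$ (a ``correlation sign''), the vanishing of the two complementary entries forcing each $\bfy^\ell$ to respect this sign, while an edge with one fractional and one integral endpoint imposes no further constraint on the fractional side. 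Since $\mcT_j$ is a tree, it follows that each $\bfy^\ell$ is determined on $\mcF\cap\mcV_j$ by a single bit $b_C\in\{0,1\}$ for each connected component $C$ of the subgraph of $\mcT_j$ induced by $\mcF\cap\mcV_j$ (fix a representative, set it to $b_C$, and propagate using the correlation signs), and from $\mu^*_i(0)=\mu^*_i(1)=\tfrac12>0$ both values of each $b_C$ occur among the $\bfy^\ell$.

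To finish I would use separability. In a tree, two distinct components $C,C'$ can only be connected through nodes of $\mcI\cap\mcV_j$, and on optimal labelings each edge incident to a fractional node contributes a term depending only on the bit of its component; hence the Lagrangian value of a $(b_C)_C$-labeling equals a constant plus $\sum_C g_C(b_C)$ for suitable $g_C$. Each $\bfy^\ell$ is a minimiser, so its bits are componentwise optimal for the $g_C$, and since both values of $b_C$ occur among the $\bfy^\ell$ we get $g_C(0)=g_C(1)$; thus every choice of the bits yields an integral minimiser. Choosing any bit vector $\bfb$ and its complement $\overline\bfb$ produces two integral minimisers $\bar\bfmu^j,\hat\bfmu^j$ of $\mcL(\cdot,\dots,\cdot,\bar\bfu)$ that agree with $\bfmu^*$ on $\mcI\cap\mcV_j$ and disagree at every node of $\mcF\cap\mcV_j$; written in the overcomplete representation this says exactly $\tfrac12(\bar\mu^j_i(x_i)+\hat\mu^j_i(x_i))=\mu^*_i(x_i)$ for all $i\in\mcV_j$ and $x_i\in S$, i.e.\ (\ref{E_average}).

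\textbf{Main obstacle.} The delicate part is the middle and last paragraphs: one must show that the set of integral minimisers of the per-tree Lagrangian is rich enough that the fractional components can be flipped independently, which is precisely where acyclicity of $\mcT_j$ and half-integrality of $\bfmu^*$ are jointly used; by contrast, the reduction in the first paragraph is routine LP duality and the overcomplete-representation bookkeeping at the end is mechanical. A secondary point to keep in mind is that the lemma implicitly lives in the binary pairwise setting, since an averaging of two \emph{integral} points to recover $\mu^*_i$ only makes sense when the node marginals of $\bfmu^*$ are half-integral.
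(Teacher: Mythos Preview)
Your argument is correct, but it follows a different route from the paper's. The paper proceeds more directly: it first invokes Theorem~\ref{T_1} to obtain an integral minimiser $\bar{\bfmu}^j$ that agrees with $\bfmu^*$ on every integral coordinate (including edge marginals), and then simply \emph{reflects} $\bar{\bfmu}^j$ through $\bfmu^*|_{\mcT_j}$, i.e.\ sets $\hat{\bfmu}^j := 2\,\bfmu^*|_{\mcT_j} - \bar{\bfmu}^j$, which amounts to keeping the integral coordinates and flipping each fractional coordinate to $1-\bar{\mu}^j$. Since both $\bar{\bfmu}^j$ and $\bfmu^*|_{\mcT_j}$ are minimisers of a linear functional, $\hat{\bfmu}^j$ has the optimal value as well; the only work is to check $\hat{\bfmu}^j\in\mcX_{\mcT_j}$, which the paper does by a short case analysis over the six half-integral edge patterns of Lemma~\ref{L_frac_sol}.

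Your approach instead avoids Theorem~\ref{T_1} and the reflection trick entirely: you take the Carath\'eodory decomposition of $\bfmu^{*j}$ into integral minimisers, read off from the vanishing coordinates that each such minimiser is parameterised by one bit per connected fractional component, and then use separability of the per-tree objective over these components (plus the fact that both bit values appear in the decomposition) to conclude that \emph{every} bit vector yields a minimiser. This is a genuinely different and somewhat more structural argument; it gives, as a by-product, a complete description of the integral minimisers that lie in the support of $\bfmu^{*j}$, whereas the paper's construction produces just the single pair needed. The paper's route is shorter and piggybacks on Theorem~\ref{T_1}; yours is self-contained but requires the extra separability step you flagged as the main obstacle. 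Both rely on the same half-integrality facts about fractional edge marginals (your ``correlation signs'' are exactly the cases of Lemma~\ref{L_frac_sol}), and both implicitly assume $\bfmu^*$ is a half-integral vertex, as you note at the end.
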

The above lemma ensures that for each optimal solution $\bfmu^*$ of the LP relaxation,
for each tree in a given decomposition there always exist two different assignments
which agree exactly
on the nodes corresponding to the integral (strongly persistent) part $\mcI$ of $\bfmu^*$. 
It is also worth noting that when LP relaxation is not tight the sets of minimising assignments to the different trees 
are disjoint on the overlapping parts due to Lemma \ref{L_opt}.

Lemma \ref{C_1} and Theorem \ref{T_2} together imply that the unambiguous part among all optimal assignments from all trees in a decomposition coincides with the integral part of the (fractional) optimal assignment via LP relaxation
giving rise to the following theorem.
\begin{theorem}
\label{T_3}
Let $\bfmu^*$ be a unique optimal solution for the LP relaxation (\ref{OP1}) and
$\bar{\bfu}$ a dual optimal for OP in  (\ref{OP2}).
Then the unambiguous part $\mcA$ of optimal assignments among all the overlapping subproblems in a decomposition coincides with the integral part $\mcI$ of $\bfmu^*$.
That is,  $\mcA = \mcI$.
\end{theorem}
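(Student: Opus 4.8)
The plan is to establish the two inclusions $\mcI \subseteq \mcA$ and $\mcA \subseteq \mcI$ separately, drawing the first from Theorem \ref{T_2} and the second from Lemma \ref{C_1}. The one structural fact used repeatedly is that at a dual optimal $\bar{\bfu} \in \mcU$ the Lagrangian $\mcL(\cdot,\ldots,\cdot,\bar{\bfu})$ has the form $\sum_{j=1}^m (\bftheta^j + \bar{\bfu}^j)^\T \bfmu^j$ and hence decouples over the trees: a tuple $(\bar{\bfmu}^1,\ldots,\bar{\bfmu}^m)$ minimises $\mcL(\cdot,\ldots,\cdot,\bar{\bfu})$ if and only if each $\bar{\bfmu}^j$ minimises $(\bftheta^j + \bar{\bfu}^j)^\T \bfmu^j$ over $\mcX_{\mcT_j}$. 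Thus the ``optimal assignments to $\mcT_j$'' that enter the definition of $\mcA$ are precisely the $j$-th components of the minimisers of $\mcL(\cdot,\ldots,\cdot,\bar{\bfu})$, and the set of such minimisers is the product of the per-tree minimiser sets.

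\emph{Inclusion $\mcI \subseteq \mcA$.} Let $i \in \mcI$ and let $\mcT_j$ be any tree with $i \in \mcV_j$. Since $\bfmu^*$ is the unique optimum of the LP relaxation, Theorem \ref{T_2} applies and gives $\bar{\mu}^j_i(x_i) = \mu^*_i(x_i)$ for all $x_i \in S$ and every minimiser $(\bar{\bfmu}^1,\ldots,\bar{\bfmu}^m)$ of $\mcL(\cdot,\ldots,\cdot,\bar{\bfu})$. By the decoupling remark this says that every optimal assignment of the subproblem on $\mcT_j$ assigns $x_i$ the one integral value determined by $\mu^*_i$. Since this holds for every tree containing $i$, the value of $x_i$ is unambiguous across the decomposition, i.e.\ $i \in \mcA$.

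\emph{Inclusion $\mcA \subseteq \mcI$.} Equivalently I show $\mcF \subseteq \mcD$. If $\mcF = \emptyset$ there is nothing to prove, so fix $i \in \mcF$. Because $\mcV = \bigcup_{j=1}^m \mcV_j$, some tree $\mcT_j$ contains $i$. Lemma \ref{C_1} provides two minimisers $\bar{\bfmu}^j, \hat{\bfmu}^j \in \mcX_{\mcT_j}$ of the slave problem on $\mcT_j$ with $\mu^*_i(x_i) = \tfrac{1}{2}\big(\bar{\mu}^j_i(x_i) + \hat{\mu}^j_i(x_i)\big)$ for all $x_i \in S$. Since $i \in \mcF$ there is some $x_i \in S$ with $\mu^*_i(x_i) \notin \{0,1\}$, while $\bar{\mu}^j_i(x_i),\hat{\mu}^j_i(x_i) \in \{0,1\}$ because both assignments lie in $\mcX_{\mcT_j}$. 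An average of two $\{0,1\}$ values can only be non-integral if the two values differ, so $\bar{\mu}^j_i(x_i) \neq \hat{\mu}^j_i(x_i)$: the tree $\mcT_j$ alone already admits two optimal assignments that disagree on $x_i$. Hence $i \notin \mcA$, i.e.\ $i \in \mcD$. Combining the two inclusions gives $\mcA = \mcI$.

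I expect the genuine difficulty to lie not in this argument, which is short, but in the care needed to match it cleanly onto the quantifiers in the statement: one must spell out the identification between minimisers of $\mcL(\cdot,\ldots,\cdot,\bar{\bfu})$ and tuples of optimal assignments of the individual slave problems, and check that the ``overlapping'' qualifier in the description of $\mcA$ does not weaken what we prove — it does not, since for $\mcA \subseteq \mcI$ a single tree witnesses the disagreement, and for $\mcI \subseteq \mcA$ agreement with $\mu^*$ holds tree by tree. It is also worth flagging explicitly that the uniqueness hypothesis on $\bfmu^*$ is invoked only for $\mcI \subseteq \mcA$, through Theorem \ref{T_2}; the reverse inclusion $\mcF \subseteq \mcD$ already holds under the weaker assumptions of Theorem \ref{T_1}.
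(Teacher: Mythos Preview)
Your proposal is correct and follows exactly the approach the paper itself indicates: the paper states that Theorem \ref{T_3} is a direct consequence of Lemma \ref{C_1} (giving $\mcF \subseteq \mcD$) together with Theorem \ref{T_2} (giving $\mcI \subseteq \mcA$), without spelling out the two inclusions in detail. Your write-up supplies precisely those details, including the separability of the Lagrangian and the observation that uniqueness of $\bfmu^*$ is needed only for the $\mcI \subseteq \mcA$ direction.
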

That is, we can extract the strongly
persistent part from DD by considering the intersection of optimal assignments for individual trees.
This is in particular convenient for a decomposition over single edges,
since computing the set of optimal assignments for an edge is straightforward.

\section{Related Work}
Several previous works \cite{WainwrightJW05a, Kolmogorov06, abs-1207-1395, RushC14}
including the original paper \cite{KomodakisPT11} on DD for MAP inference
comment on the optimality of DD and LP relaxation when the latter is tight.
While in \cite{KomodakisPT11} the authors make an explicit statement,
in \cite{WainwrightJW05a} the same question is approached by
introducing a notion of a tree agreement, which is generalised to a weak tree agreement in \cite{Kolmogorov06}.
Both, however, can be seen as a special case of \cite{KomodakisPT11} in the sense that the corresponding lower bound on the optimal value
of the MAP problem is equal to the optimal value of a corresponding dual problem in case of tree agreement
of all the involved subproblems.

Concerning the case where the LP relaxation is not tight,
to the best of our knowledge,
there are no previous works which directly address the question how the optimal solutions of the LP relaxation
are related to the assignments produced via DD.
Instead the existing works proceed with a discussion of tightening a corresponding relaxation \cite{KomodakisPT11, abs-1206-3288, Sontag_thesis10}.
In \cite{KolmogorovW05} the authors provide a related discussion for tree-reweighted (TRW) message passing and show that the partial assignment corresponding to the unambiguous part in the intersection of all optimal assignments for individual trees is strongly persistent. The TRW algorithms are less accurate than DD, however, in case of binary pairwise MRFs both achieve dual optimal value.
Our results are based on a different proof than in \cite{KolmogorovW05}, which supports a similar statement that the unambiguous part in a corresponding assignment is strongly persistent, but additionally implies that it is exactly the integral part of the  LP relaxation when the corresponding solution is unique.
Furthermore, our result in Theorem \ref{T_1} extends also to the case of arbitrary graphs.

Another  related question to the discussion in the present paper is the problem
of recovering optimal primal solutions of an LP relaxation from dual optimal solutions via DD \cite{KomodakisPT11, SonGloJaa_optbook, NedicO09, AnstreicherW09}.
In particular, \cite{AnstreicherW09} shows that the strongly persistent part of an LP solution (for a binary pairwise MRF)
can be recovered from the DD based on subradient optimisation.
However, it is not the case with other optimisation methods.
In contrast, our corresponding result holds independently of 
the chosen optimisation technique.

In the case of binary pairwise MRFs each optimal solution of the LP relaxation is 
known to be strongly persistent. In this paper we show that (in a non degrease case) this property is
inherited by every solution provided via DD in the sense that persistent part is a subset of the corresponding assignment.
Since the fractional part of a solution of the LP relaxation conveys no useful information
about the preference of the variables in the fractional area to be in specific state,
the presented result further supports the practical usefulness of the DD in that case.

\section{Numerical Validation}
 \begin{figure*}[t]
\centering
\includegraphics[scale = 0.8]{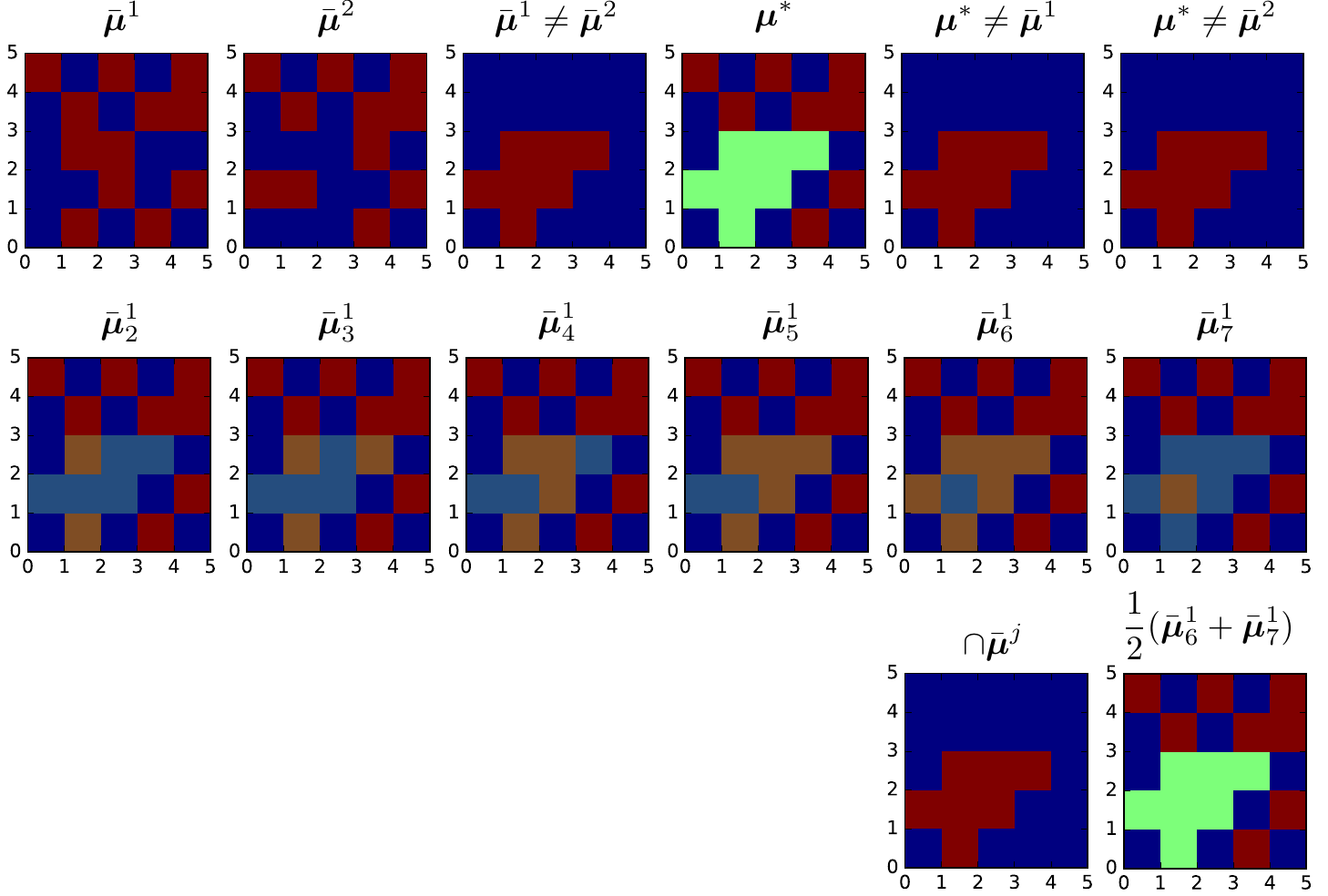}
\caption{Illustration of numerical validation of the theoretical results including Theorem \ref{T_1},  \ref{T_2},  \ref{T_3} and Lemma \ref{C_1}.
We consider a decomposition of an $5 \times 5$ Ising grid model
into two spanning trees (more precisely, disconnected forests): $\mcT_1$ consisting of all vertical edges and $\mcT_2$ consisting of all
horizontal edges. The corresponding optimal assignments for both trees are denoted by $\bar{\bfmu}^1$
and $\bar{\bfmu}^2$, respectively. ${\bfmu}^*$ is a corresponding optimal assignment from the LP relaxation.
Here, the red area corresponds to label $1$, blue area to label $0$, and green area to label $0.5$.
The individual plots in the first row visualise a corresponding assignment or a disagreement between two assignments.
The second row provides (additionally to $\bar{\bfmu}^1$) further optimal assignments for the subproblem $\mcT_1$.
The first plot in the third row illustrates the unambiguous part among all optimal and overlapping assignments in blue and the ambiguous part
in red.  The last plot illustrates an average of the two assignments above.
}
\label{fig_num_valid}
\end{figure*}
Here we present a numerical experiment summarised in Figure \ref{fig_num_valid} to validate the theoretical statements in the paper.
For this purpose we considered a $5 \times 5$ Ising grid model corresponding to 25 pixels
and defined its energy according to the following procedure.
The unary potentials have been all set to zero. The values of the corresponding edge potentials
have been selected uniformly at random from an interval $[-0.5, 0.5]$. This process has been repeated
until the corresponding LP relaxation yielded a fractional solution $\bfmu^*$ (see the fourth plot in the first row).

Given such an energy, we then considered a decomposition of the above Ising model into two spanning trees $\mcT_1$ (all the vertical edges)
and $\mcT_2$ (all the horizontal edges). We used the subgradient method to solve a corresponding dual problem
and got optimal assignments $\bar{\bfmu}^1$ and $\bar{\bfmu}^2$ corresponding to subproblems $\mcT_1$ and $\mcT_2$, respectively (see the first two plots in the first row).

The last two plots in the upper row in Figure \ref{fig_num_valid} can serve as a validation of Theorem \ref{T_1}.
Namely, the red area in these two plots corresponds to the variables on which each assignment disagrees with the LP solution $\bfmu^*$.
As we can see it happens only for the fractional area, where each variable in $\bfmu^*$ has the value $0.5$.
In other words, $\bar{\bfmu}^1$ and $\bar{\bfmu}^2$ agree with the integral part of $\bfmu^*$.

To support Theorem \ref{T_2} we computed further optimal assignments $\bar{\bfmu}^1_2, ..., \bar{\bfmu}^1_7$ for the subproblem $\mcT_1$ additionally to $\bar{\bfmu}^1$.
These are visualised in the second row in Figure \ref{fig_num_valid}. We overlay these assignments with the assignment $\bfmu^*$ in a transparent way to emphasise the difference to the LP solution.
We see that all these optimal assignments agree with the integral part of $\bfmu^*$.

The last plot in the third row validates the statement in Lemma \ref{C_1}.
It visualises the average of the the two plots above corresponding to the optimal assignments $\bar{\bfmu}^1_6$ and $\bar{\bfmu}^1_7$ for the subproblem $\mcT_1$.

Finally, the first plot in the third row supports
the claim in Theorem $\ref{T_3}$.
Namely, the blue area corresponds to the unambiguous part $\mcA$ where each variable has a unique value in all the optimal
assignments among the two subproblems. The red area marks the ambiguous part, where each variable has different
values in different assignments. We see that the equality $\mcA = \mcI$ holds.

\section{Conclusion}\label{sec:Conclusion}
We presented the established frameworks of linear programming (LP) relaxation and dual decomposition (DD)
for the task of MAP inference in discrete MRFs. In the case when a corresponding LP relaxation is tight, these two methods
are known to be equivalent both providing an optimal MAP assignment.
However, less is known about their relationship in the opposite and more realistic case.
While it is known that both methods have the same optimal objective value also in the non tight regime, it is an interesting question if there are other properties they share.
In particular, the connection between the solutions of LP relaxation and the assignments which can be extracted by DD has not been clarified.
For example, even if the solution of LP relaxation is unique, there might be multiple optimal (but disagreeing) assignments
via DD. What is  the nature of this ambiguity?  Are all these assignments equivalent or can we even extract additional information about the optimal
solutions from analysing the disagreement behaviour? These and other questions were the main motivation
for our paper.
Here we successfully provided a few novel findings explaining how the fully integral assignments obtained via DD
agree with the optimal fractional assignments via LP relaxation when the latter is not tight.
More specifically, we have proved:
\begin{itemize}
\item given an optimal (fractional) solution of the LP relaxation, there always exists an optimal variable assignment via DD which agrees with the integral part of the LP solution;
this also holds for non binary models with arbitrary higher order potentials
\item for binary pairwise MRFs (in a non degenerate case) the first result holds for every optimal assignment which can be extracted via DD
\item for binary pairwise MRFs (in a non degenerate case) the unambiguous part among all optimal assignments from different trees in a decomposition
coincides with the integral part of the (fractional) optimal assignment via LP relaxation
\end{itemize}
In particular, for binary pairwise MRFs the integral part of an optimal solution provided via
LP relaxation is known to be strongly persistent. Therefore, due to the properties listed above, 
we can conclude that (for this case) both methods LP relaxation and DD
share the partial optimality property of their solutions.

Practically, it has the following implications. 1) If the goal is to find an accurate MAP assignment 
we can use LP relaxation first to fix the integral part and then apply an approximation algorithm (e.g. loopy belief propagation) to set
the remaining variables in the fractional part -- each fractional node has the value $0.5$
showing no preference of a corresponding variable to be in a specific state. On the other hand,
DD provides fully integral assignments, which agree with the integral part of LP relaxation.
2) If we are only interested in the persistent part, we can extract the corresponding partial assignment
from DD by considering an intersection of all the optimal assignments to the individual trees in a decomposition.
The unambiguous part then coincides with the strongly persistent part of LP relaxation, provided a corresponding solution is unique,
or to a subset of it in the opposite case.
In particular, considering a decomposition over individual edges, is more beneficial in that case.
Namely, finding all optimal assignments to trees becomes a trivial task.

To summarise, the LP relaxation is a popular method for discrete MAP inference in pairwise graphical models
because of its appealing (partial) optimality properties.
However, this method does not scale nicely due to the extensive memory requirements
restricting its practical use for bigger problems.
Here, DD provides an effective alternative via distributed optimisation,
which scales to problems of arbitrary size --- we can always consider a decomposition over
the individual edges. Finally, the results presented in this paper suggest, that when using DD instead
of the LP relaxation we do not lose any of the nice properties of the latter. Both methods provide (for binary pairwise models) exactly the same 
information about their solutions.

An interesting question is which findings in the paper can be extended beyond the pairwise models
involving higher order potentials. We will continue this investigation in the future works.

\subsubsection*{Acknowledgments}
This work was supported by the Federal Ministry of Education and Research under the Berlin Big Data Center Project
under Grant FKZ 01IS14013A. The work of K.-R. M{\"u}ller was supported in part by the BK21 Program of NRF Korea,
BMBF, under Grant 01IS14013A and by Institute for Information and Communications Technology Promotion (IITP) grant funded by the Korea government (No. 2017-0-00451)

\small
\bibliographystyle{abbrv}
\bibliography{references}

\begin{thebibliography}{10}

\bibitem{AnstreicherW09}
K.~M. Anstreicher and L.~A. Wolsey.
\newblock Two "well-known" properties of subgradient optimization.
\newblock {\em Math. Program.}, 120(1):213--220, 2009.

\bibitem{opac-b1094316}
D.~Bertsimas and J.~N. Tsitsiklis.
\newblock {\em Introduction to linear optimization}.
\newblock Athena scientific series in optimization and neural computation.
  Athena Scientific, Belmont (Mass.), 1997.

\bibitem{Deza97}
M.~M. Deza and M.~Laurent.
\newblock {\em Geometry of Cuts and Metrics}.
\newblock Algorithms and Combinatorics, vol. 15. Springer, 1997.

\bibitem{Everett_63}
H.~Everett.
\newblock Generalized lagrange multiplier method for solving problems of
  optimum allocation of resources.
\newblock {\em Operations Research}, 11(3):399--417, may 1963.

\bibitem{opac-b1086766}
M.~Gr{\"{o}}tschel, L.~Lovasz, and A.~Schrijver.
\newblock {\em Geometric algorithms and combinatorial optimization}.
\newblock Algorithms and combinatorics. Springer-Verlag, Berlin, New York,
  1988.

\bibitem{HammerHS84}
P.~L. Hammer, P.~Hansen, and B.~Simeone.
\newblock Roof duality, complementation and persistency in quadratic 0-1
  optimization.
\newblock {\em Math. Program.}, 28(2):121--155, 1984.

\bibitem{KappesAHSNBKKKL15}
J.~H. Kappes, B.~Andres, F.~A. Hamprecht, C.~Schn{\"{o}}rr, S.~Nowozin,
  D.~Batra, S.~Kim, B.~X. Kausler, T.~Kr{\"{o}}ger, J.~Lellmann, N.~Komodakis,
  B.~Savchynskyy, and C.~Rother.
\newblock A comparative study of modern inference techniques for structured
  discrete energy minimization problems.
\newblock {\em International Journal of Computer Vision}, 115(2):155--184,
  2015.

\bibitem{Koller:2009:PGM:1795555}
D.~Koller and N.~Friedman.
\newblock {\em Probabilistic Graphical Models: Principles and Techniques -
  Adaptive Computation and Machine Learning}.
\newblock The MIT Press, 2009.

\bibitem{Kolmogorov06}
V.~Kolmogorov.
\newblock Convergent tree-reweighted message passing for energy minimization.
\newblock {\em {IEEE} Trans. Pattern Anal. Mach. Intell.}, 28(10):1568--1583,
  2006.

\bibitem{KolmogorovW05}
V.~Kolmogorov and M.~J. Wainwright.
\newblock On the optimality of tree-reweighted max-product message-passing.
\newblock In {\em {UAI} '05, Proceedings of the 21st Conference in Uncertainty
  in Artificial Intelligence, Edinburgh, Scotland, July 26-29, 2005}, pages
  316--323, 2005.

\bibitem{abs-1207-1395}
V.~Kolmogorov and M.~J. Wainwright.
\newblock On the optimality of tree-reweighted max-product message-passing.
\newblock {\em CoRR}, abs/1207.1395, 2012.

\bibitem{KomodakisPT11}
N.~Komodakis, N.~Paragios, and G.~Tziritas.
\newblock {MRF} energy minimization and beyond via dual decomposition.
\newblock {\em {IEEE} Trans. Pattern Anal. Mach. Intell.}, 33(3):531--552,
  2011.

\bibitem{Luenberger73}
D.~G. Luenberger.
\newblock {\em Introduction to Linear and Nonlinear Programming}.
\newblock Addison-Wesley, 1973.

\bibitem{NedicO09}
A.~Nedic and A.~E. Ozdaglar.
\newblock Approximate primal solutions and rate analysis for dual subgradient
  methods.
\newblock {\em {SIAM} Journal on Optimization}, 19(4):1757--1780, 2009.

\bibitem{Pearl89}
J.~Pearl.
\newblock {\em Probabilistic reasoning in intelligent systems - networks of
  plausible inference}.
\newblock Morgan Kaufmann series in representation and reasoning. Morgan
  Kaufmann, 1989.

\bibitem{RushC14}
A.~M. Rush and M.~Collins.
\newblock A tutorial on dual decomposition and lagrangian relaxation for
  inference in natural language processing.
\newblock {\em CoRR}, abs/1405.5208, 2014.

\bibitem{Sontag_thesis10}
D.~Sontag.
\newblock {\em Approximate Inference in Graphical Models using LP Relaxations}.
\newblock PhD thesis, Massachusetts Institute of Technology, Department of
  Electrical Engineering and Computer Science, 2010.

\bibitem{SonGloJaa_optbook}
D.~Sontag, A.~Globerson, and T.~Jaakkola.
\newblock Introduction to dual decomposition for inference.
\newblock In S.~Sra, S.~Nowozin, and S.~J. Wright, editors, {\em Optimization
  for Machine Learning}. MIT Press, 2011.

\bibitem{abs-1206-3288}
D.~Sontag, T.~Meltzer, A.~Globerson, T.~S. Jaakkola, and Y.~Weiss.
\newblock Tightening {LP} relaxations for {MAP} using message passing.
\newblock {\em CoRR}, abs/1206.3288, 2012.

\bibitem{WainwrightJW05a}
M.~J. Wainwright, T.~S. Jaakkola, and A.~S. Willsky.
\newblock {MAP} estimation via agreement on trees: message-passing and linear
  programming.
\newblock {\em {IEEE} Trans. Information Theory}, 51(11):3697--3717, 2005.

\bibitem{WainwrightJ08}
M.~J. Wainwright and M.~I. Jordan.
\newblock Graphical models, exponential families, and variational inference.
\newblock {\em Foundations and Trends in Machine Learning}, 1(1-2):1--305,
  2008.

\bibitem{WangY14}
J.~Wang and S.~Yeung.
\newblock A compact linear programming relaxation for binary sub-modular {MRF}.
\newblock In {\em Energy Minimization Methods in Computer Vision and Pattern
  Recognition - 10th International Conference, {EMMCVPR} 2015, Hong Kong,
  China, January 13-16, 2015. Proceedings}, pages 29--42, 2014.

\end{thebibliography}

\numberwithin{equation}{section}
\numberwithin{theorem}{section}
\numberwithin{figure}{section}
\numberwithin{algorithm}{section}
\numberwithin{table}{section}
\renewcommand{\thesection}{{\Alph{section}}}
\renewcommand{\thesubsection}{\Alph{section}.\arabic{subsection}}
\renewcommand{\thesubsubsection}{\Roman{section}.\arabic{subsection}.\arabic{subsubsection}}
\setcounter{secnumdepth}{-1}
\setcounter{secnumdepth}{1}
\setcounter{section}{0}

\newpage
\section*{Partial Optimality of  Dual Decomposition\\ for MAP Inference in Pairwise MRFs - Supplementary Material}

The following two lemmas are required for the proof of Theorem \ref{T_1}.
\begin{lemma}
\label{L:g1=g2}
Let $g_{\ref{OP2}}$, $g_{\ref{OP3}}$ be the dual functions of the problems (\ref{OP2}) and (\ref{OP3}) (presented in (\ref{D2}) and (\ref{D3})), respectively.
For any value of the dual variables $\bfu$ the equality $g_{\ref{OP2}}(\bfu) = g_{\ref{OP3}}(\bfu)$ holds.
\end{lemma}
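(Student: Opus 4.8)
The plan is to exploit two facts: for a fixed $\bfu$ both dual objectives are separable over the trees of the decomposition, and on each individual tree the LP relaxation of the MAP problem is tight, so that minimising a linear function over $\mcX_{\mcT_j}$ and over $L_{\mcT_j}$ yields the same value.

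First I would note that in both $g_{\ref{OP2}}(\bfu)$ and $g_{\ref{OP3}}(\bfu)$ the blocks $\bfmu^1,\dots,\bfmu^m$ are coupled neither through the feasible set (which is the independent product $\mcX_{\mcT_1}\times\cdots\times\mcX_{\mcT_m}$, resp.\ $L_{\mcT_1}\times\cdots\times L_{\mcT_m}$) nor through the objective $\sum_{j=1}^m (\bftheta^j+\bfu^j)^\T\bfmu^j$. Hence the joint infima factorise:
\[
g_{\ref{OP2}}(\bfu)=\sum_{j=1}^m\min_{\bfmu^j\in\mcX_{\mcT_j}}(\bftheta^j+\bfu^j)^\T\bfmu^j,\qquad
g_{\ref{OP3}}(\bfu)=\sum_{j=1}^m\min_{\bfmu^j\in L_{\mcT_j}}(\bftheta^j+\bfu^j)^\T\bfmu^j ,
\]
where the minima are attained and finite because $\mcX_{\mcT_j}$ is a nonempty finite set and $L_{\mcT_j}$ is a nonempty bounded polytope (each coordinate lies in $[0,1]$). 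It thus suffices to show, for every $j$ and every linear objective, that the minimum over $\mcX_{\mcT_j}$ equals the minimum over $L_{\mcT_j}$.

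The key step is the classical fact that on a tree the local consistency polytope coincides with the marginal polytope, $L_{\mcT_j}=\mcM_{\mcT_j}=\mathrm{conv}\,\mcX_{\mcT_j}$ (see e.g.\ \cite{WainwrightJ08}); equivalently, every vertex of $L_{\mcT_j}$ is integral and hence lies in $\mcX_{\mcT_j}$. Since a linear functional over a nonempty polytope attains its minimum at a vertex, minimising the objective over $L_{\mcT_j}$ gives the same value as minimising it over its vertex set, which is contained in $\mcX_{\mcT_j}\subseteq L_{\mcT_j}$; the two minima therefore agree. If one prefers not to invoke this fact as a black box, it can be re-derived directly via a leaf-to-root (junction-tree) construction that expresses any $\bfmu\in L_{\mcT_j}$ as a convex combination of integral configurations from $\mcX_{\mcT_j}$. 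Summing the per-tree equalities over $j$ yields $g_{\ref{OP2}}(\bfu)=g_{\ref{OP3}}(\bfu)$ for every $\bfu$, as claimed.

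The only genuinely non-trivial ingredient is the tightness of the LP relaxation on tree-structured graphs; everything else — separability of the infima, boundedness of $L_{\mcT_j}$, and vertex-optimality of linear programs — is routine bookkeeping, so I would expect the write-up to be short.
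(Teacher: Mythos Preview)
Your proof is correct and follows essentially the same approach as the paper: separate the dual objective over the trees, then on each tree use $L_{\mcT_j}=\mcM_{\mcT_j}=\mathrm{conv}\,\mcX_{\mcT_j}$ together with linearity of the objective to conclude that the per-tree infima over $L_{\mcT_j}$ and over $\mcX_{\mcT_j}$ coincide. The paper's write-up is indeed as short as you anticipate; your additional remarks on attainment and boundedness, and the optional junction-tree argument, are not needed but do no harm.
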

\begin{proof}
\begin{equation*}
\begin{aligned}
g_{\ref{OP3}}(\bfu)   =& \inf_{\bfmu^1 \in L_{\mcT_1}, ..., \bfmu^m \in L_{\mcT_m}} \left\{\sum_{j = 1}^m (\bftheta^j + \bfu^j)^\T \bfmu^j  \right\}\\
  =& \inf_{\bfmu^1 \in L_{\mcT_1}} \{(\bftheta^1 + \bfu^1)^\T \bfmu^1 \} + ... + \inf_{\bfmu^m \in L_{\mcT_m}} \{(\bftheta^m + \bfu^m)^\T \bfmu^m\}\\
  =& \inf_{\bfmu^1 \in \mcX_{\mcT_1}} \{(\bftheta^1 + \bfu^1)^\T \bfmu^1 \} + ... + \inf_{\bfmu^m \in \mcX_{\mcT_m}} \{(\bftheta^m + \bfu^m)^\T \bfmu^m\}\\
  =& \inf_{\bfmu^1 \in \mcX_{\mcT_1}, ..., \bfmu^m \in \mcX_{\mcT_m}} \left\{\sum_{j = 1}^m (\bftheta^j + \bfu^j)^\T \bfmu^j  \right\} = g_{\ref{OP2}}(\bfu)
\end{aligned}
\end{equation*}
In the third equation we used the known fact that for a tree-structured MRF the local consistency polytope coincides
with the marginal polytope and that a corresponding objective for each subproblem is linear.
\end{proof}

\begin{lemma}
\label{L:Conv}
For any tree-structured MRF $\mcT$ with the corresponding set of valid assignments $\mcX_{\mcT} \subseteq \mathbb{R}^d$
in the standard overcomplete representation (as defined in (\ref{E_X_G}))
and for any $\mcI \subseteq \{1, ..., d\}$
the following equality holds
\begin{equation}
\textrm{conv }\mcX_{\mcT} \cap \{\bfmu \in \mathbb{R}^d \colon \bfmu_{\mcI} = \bfmu^*_\mcI\} = \textrm{conv }\{\bfmu \in \mcX_{\mcT} \colon \bfmu_{\mcI} = \bfmu^*_{\mcI}\},
\end{equation}
where $\bfmu^* \in \mathbb{R}^d$ is a point, for which there exists an assignment $\bar{\bfmu} \in \mcX_{\mcT}$ with $\bar{\bfmu}_{\mcI} = \bfmu^*_{\mcI}$.
\end{lemma}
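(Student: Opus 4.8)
The plan is to prove the two inclusions separately. The inclusion "$\supseteq$" is immediate and I would dispose of it first: the set $\{\bfmu \in \mcX_{\mcT} \colon \bfmu_{\mcI} = \bfmu^*_{\mcI}\}$ is contained both in $\textrm{conv }\mcX_{\mcT}$ and in the affine subspace $\{\bfmu \colon \bfmu_{\mcI} = \bfmu^*_{\mcI}\}$; since the intersection of these two convex sets is convex, it contains the convex hull of that set, which is exactly the right-hand side.

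The substance is in the inclusion "$\subseteq$", and here the key observation is that the hypothesis — existence of $\bar{\bfmu} \in \mcX_{\mcT}$ with $\bar{\bfmu}_{\mcI} = \bfmu^*_{\mcI}$ — forces $\bfmu^*_{\mcI}$ to be a $\{0,1\}$-vector, because by the definition of $\mcX_{\mcT}$ in (\ref{E_X_G}) every coordinate of every assignment in $\mcX_{\mcT}$ lies in $\{0,1\}$. Given this, I would take an arbitrary $\bfmu$ in the left-hand side and, using that $\mcX_{\mcT}$ is a finite set, write it as a strict convex combination $\bfmu = \sum_k \lambda_k \mathbf{v}_k$ with $\mathbf{v}_k \in \mcX_{\mcT}$, $\lambda_k > 0$, $\sum_k \lambda_k = 1$. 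Then, for each coordinate $c \in \mcI$, one has $\sum_k \lambda_k (\mathbf{v}_k)_c = \mu_c = \mu^*_c$; if $\mu^*_c = 0$ the nonnegativity of the entries of $\mathbf{v}_k$ forces every term to vanish, and if $\mu^*_c = 1$ then $\sum_k \lambda_k(1-(\mathbf{v}_k)_c)=0$ with nonnegative summands forces $(\mathbf{v}_k)_c = 1$. Either way $(\mathbf{v}_k)_c = \mu^*_c$ for every $k$, so $(\mathbf{v}_k)_{\mcI} = \bfmu^*_{\mcI}$ for all $k$; hence each $\mathbf{v}_k$ belongs to $\{\bfmu \in \mcX_{\mcT} \colon \bfmu_{\mcI} = \bfmu^*_{\mcI}\}$ and $\bfmu$, being a convex combination of them, lies in its convex hull.

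There is no deep obstacle here; the argument is elementary, and the only point needing care is to make clear \emph{where} the realizability hypothesis on $\bfmu^*$ is used: it is precisely what rules out a coordinate of $\bfmu^*_{\mcI}$ lying strictly between $0$ and $1$ (e.g.\ a half-integral node variable), in which case the right-hand side would be empty while the left-hand side need not be, and the identity would fail. I would also remark that the tree structure of $\mcT$ is not actually invoked in this lemma — only the $\{0,1\}$-valuedness and nonnegativity of the coordinates of $\mcX_{\mcT}$ matter — whereas the tree assumption is what becomes essential when this lemma is combined with Lemma~\ref{L:g1=g2} in the proof of Theorem~\ref{T_1}.
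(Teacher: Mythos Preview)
Your proof is correct and follows essentially the same route as the paper's own argument: both directions are handled identically, with the ``$\subseteq$'' direction proceeding by writing an element of the left-hand side as a strict convex combination of points of $\mcX_{\mcT}$ and then using the $\{0,1\}$-valuedness of the coordinates together with $\mu^*_c\in\{0,1\}$ to force each term in the combination to agree with $\bfmu^*$ on $\mcI$. Your explicit remarks on where the realizability hypothesis enters and on the irrelevance of the tree structure for this particular lemma are accurate observations that the paper does not spell out.
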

\begin{proof}
$"\subseteq":$ Let $\bfv \in \textrm{conv } \mcX_{\mcT} \cap \{\bfmu \in \mathbb{R}^d \colon \bfmu_{\mcI} = \bfmu^*_{\mcI}\}$.
We now show that $\bfv$ can be represented as a convex combination of points $\bfmu \in \mcX_{\mcT}$ where $\bfmu_{\mcI} = \bfmu^*_{\mcI}$ for each $\bfmu$ in the combination.
On the one hand, since $\bfv \in \textrm{conv } \mcX_{\mcT}$ we
can write it as a convex combination $\bfv = \sum_{\bfmu} \alpha_{\bfmu} \bfmu$ for $\bfmu \in \mcX_{\mcT}$, $\alpha_{\bfmu} \in [0, 1]$
where we assume without loss of generality that the sum contains only $\alpha_{\bfmu} > 0$.
On the other hand, since $\bfv \in \{\bfmu \in \mathbb{R}^d \colon \bfmu_{\mcI} = \bfmu^*_{\mcI}\}$, it implies
that $\sum_{\bfmu} \alpha_{\bfmu} \bfmu_{\mcI} = \bfmu^*_{\mcI}$ must hold.
To prove the subset relationship
it suffices to show that $\mu_i = \mu^*_i$ for all $i \in \mcI$ for each of the points $\bfmu$
in the convex combination.
Now let $\mu^*_i = 1$ for some $i \in \mcI$.
Assuming that there is a $\hat{\bfmu}$
in our combination with $\hat{\mu}_i = 0$
results in the following contradiction:
\begin{equation*}
\sum_{\bfmu} \alpha_{\bfmu} \mu_i = \alpha_{\hat{\bfmu}} \cdot \hspace*{1pt}0 + \sum_{\bfmu \neq \hat{\bfmu}} \alpha_{\bfmu} \underbrace{\mu_i}_{\leqslant 1} \leqslant \sum_{\bfmu \neq \hat{\bfmu}} \alpha_{\bfmu} < 1 = \mu^*_i,
\end{equation*}
that is, $\sum_{\bfmu} \alpha_{\bfmu} \mu_i \neq \mu^*_i$.
Analogously, considering the case $\mu^*_i = 0$ and
assuming the existence of one $\hat{\mu}_i = 1$
gives rise to the following contradiction:
\begin{equation*}
\sum_{\bfmu} \alpha_{\bfmu} \mu_i = \alpha_{\hat{\bfmu}} \cdot 1 + \underbrace{\sum_{\bfmu \neq \hat{\bfmu}} \alpha_{\bfmu} \mu_i}_{\geqslant 0} \geqslant \alpha_{\hat{\bfmu}} >  0 = \mu^*_i,
\end{equation*}
that is, $\sum_{\bfmu} \alpha_{\bfmu} \mu_i \neq \mu^*_i$.\\[1pt]

\noindent $"\supseteq":$ This direction follows directly from $\{\bfmu \in \mcX_{\mcT} \colon \bfmu_{\mcI} = \bfmu^*_{\mcI}\} \subseteq \mcX_{\mcT}$ and
$\{\bfmu \in \mcX_{\mcT} \colon \bfmu_{\mcI} = \bfmu^*_{\mcI}\} \subseteq \{\bfmu \in \mathbb{R}^d \colon \bfmu_{\mcI} = \bfmu^*_{\mcI}\}$
where $\{\bfmu \in \mathbb{R}^d \colon \bfmu_{\mcI} = \bfmu^*_{\mcI}\}$ is a convex set.
\end{proof}

\section{Proof of Theorem \ref{T_1}}
The following derivations imply
the existence of a set of assignments $\bfmu^1, ..., \bfmu^m$ for the individual subproblems
according to the statement in the theorem:
\begin{equation*}
\begin{aligned}
\inf_{\bfmu^1 \in \mcX_{\mcT_1}, ..., \bfmu^m \in \mcX_{\mcT_m}} \mcL(\bfmu^1, ..., \bfmu^m, \bar{\bfu})  \overset{L \ref{L:g1=g2}}{=}& \inf_{\bfmu^1 \in L_{\mcT_1}, ..., \bfmu^m \in L_{\mcT_m}} \mcL(\bfmu^1, ..., \bfmu^m, \bar{\bfu})\\
  \overset{}{=}& \inf_{\bfmu^1 \in L_{\mcT_1}, ..., \bfmu^m \in L_{\mcT_m} \colon \bfmu^j_{\mcI} = \bfmu^*_{\mcI}} \mcL(\bfmu^1, ..., \bfmu^m, \bar{\bfu})\\
  \overset{L \ref{L:Conv}}{=}& \inf_{\bfmu^1 \in \mcX_{\mcT_1}, ..., \bfmu^m \in \mcX_{\mcT_m} \colon \bfmu^j_{\mcI} = \bfmu^*_{\mcI}} \mcL(\bfmu^1, ..., \bfmu^m, \bar{\bfu})\\
\end{aligned}
\end{equation*}
The first equality holds due to Lemma \ref{L:g1=g2}.
The second equality is due to the following fact.
Since strong duality holds for OP in (\ref{OP3}),
every optimal primal solution is a minimiser of the Lagrangian $\mcL(\cdot, ..., \cdot,\bar{\bfu})$.
Therefore, the set of feasible solutions restricted by the constraints $\bfmu^j_{\mcI} = \bfmu^*_{\mcI}$ contains at least one optimal
solution $\bfmu^1 := \bfmu^*|_{\mcT_1}, ..., \bfmu^m := \bfmu^*|_{\mcT_m}$,
where $\bfmu^*|_{\mcT_j}$ denotes a projection to a subspace corresponding to a tree $\mcT_j$.
The third equality can be shown using Lemma \ref{L:Conv} as follows:
\begin{equation*}
\begin{aligned}
   & \inf_{\bfmu^1 \in L_{\mcT_1}, ..., \bfmu^m \in L_{\mcT_m} \colon \bfmu^j_{\mcI} = \bfmu^*_{\mcI}} \mcL(\bfmu_1, ..., \bfmu_m, \bar{\bfu})\\
 =& \inf_{\bfmu^1 \in L_{\mcT_1}, ..., \bfmu^m \in L_{\mcT_m} \colon \bfmu^j_{\mcI} = \bfmu^*_{\mcI}} \left\{\sum_{j = 1}^m (\bftheta_j + \bar{\bfu}_j)^\T \bfmu^j  \right\}\\
 =& \inf_{\bfmu^1 \in L_{\mcT_1} \colon \bfmu^1_{\mcI} = \bfmu^*_{\mcI}} \{(\bftheta_1 + \bar{\bfu}_1)^\T \bfmu^1 \} + ... + \inf_{\bfmu^m \in L_{\mcT_m} \colon \bfmu^m_{\mcI} = \bfmu^*_{\mcI}} \{(\bftheta_m + \bar{\bfu}_m)^\T \bfmu^m\}\\
 \overset{(a)}{=}& \inf_{\bfmu^1 \in \mcM_{\mcT_1} \colon \bfmu^1_{\mcI} = \bfmu^*_{\mcI}} \{(\bftheta_1 + \bar{\bfu}_1)^\T \bfmu^1 \} + ... + \inf_{\bfmu^m \in \mcM_{\mcT_m} \colon \bfmu^m_{\mcI} = \bfmu^*_{\mcI}} \{(\bftheta_m + \bar{\bfu}_m)^\T \bfmu^m\}\\
 \overset{(b)}{=}& \inf_{\bfmu^1 \in \textrm{ conv }\{\bfmu \in \mcX_{\mcT_1} \colon \bfmu_{\mcI} = \bfmu^*_{\mcI}\}} \{(\bftheta_1 + \bar{\bfu}_1)^\T \bfmu^1 \} + ... + \inf_{\bfmu^m \in \textrm{ conv }\{\bfmu \in \mcX_{\mcT_m} \colon \bfmu_{\mcI} = \bfmu^*_{\mcI}\}} \{(\bftheta_m + \bar{\bfu}_m)^\T \bfmu^m\}\\
 \overset{(c)}{=}& \inf_{\bfmu^1 \in \mcX_{\mcT_1} \colon \bfmu^1_{\mcI} = \bfmu^*_{\mcI}} \{(\bftheta_1 + \bar{\bfu}_1)^\T \bfmu^1 \} + ... + \inf_{\bfmu^m \in \mcX_{\mcT_m} \colon \bfmu^m_{\mcI} = \bfmu^*_{\mcI}} \{(\bftheta_m + \bar{\bfu}_m)^\T \bfmu^m\}\\
=& \inf_{\bfmu^1 \in \mcX_{\mcT_1}, ..., \bfmu^m \in \mcX_{\mcT_m} \colon \bfmu^j_{\mcI} = \bfmu^*_{\mcI}} \left\{\sum_{j = 1}^m (\bftheta_j + \bar{\bfu}_j)^\T \bfmu^j  \right\}\\
 =& \inf_{\bfmu^1 \in \mcX_{\mcT_1}, ..., \bfmu^m \in \mcX_{\mcT_m} \colon \bfmu^j_{\mcI} = \bfmu^*_{\mcI}} \mcL(\bfmu_1, ..., \bfmu_m, \bar{\bfu})\\
\end{aligned}
\end{equation*}
where the step in $(a)$ holds because for every tree-structured MRF $\mcT_j$ the marginal polytope $\mcM_{\mcT_j}$ coincides
with the local consistency polytope $L_{\mcT_j}$; in step $(b)$ we use $\mcM_{\mcT_j} = \textrm{conv } \mcX_{\mcT_j}$ and Lemma \ref{L:Conv};
finally, the step in (c) holds because a linear objective over a polytope always achieves its optimum at least at one of the extreme points
(that is, corners) of the latter.

$\Box$

Note that the agreement on the integral part holds also for edge marginals,
that is, for every dimensions in $\bfmu^*$ with an integral value, even if the nodes of an edge are fractional.
Namely, we ca extend the
constraint $\bar{\bfmu}^j_{\mcI} = \bfmu^*_{\mcI}$ in Theorem \ref{T_1} to every dimension having an integral value
and the proof still works.

\section{Proof of Lemma \ref{C_1}}
We denote by $\bfmu^*|_{\mcT_j}$ a projection of a solution $\bfmu^*$ over a graph $\mcG$
to a subspace corresponding to a subtree $\mcT_j$. 
Since strong duality holds for the problem (\ref{OP3})
any primal optimal is a minimiser of the Lagrangian.
Therefore, each restriction $\bfmu^*|_{\mcT_j}$ is a minimiser of a corresponding subproblem
over tree $\mcT_j$.
Furthermore, Theorem \ref{T_1} guarantees an existence of a minimiser $\bar{\bfmu}^j$ that
agrees with the integral part of $\bfmu^*|_{\mcT_j}$ and differs from $\bfmu^*$ only on the fractional entries.
Note that this holds also for edge marginals. Namely, in the proof of Theorem \ref{T_1} we can extend the
constraints $\bar{\bfmu}^j_{\mcI} = \bfmu^*_{\mcI}$ to every dimension in $\bfmu^*$ having an integral value
(including edge marginals)
and the proof still works.
Any point on the line through these two solutions ($\bfmu^*|_{\mcT_j}$ and $\bar{\bfmu}^j$) is also optimal since the corresponding objective is linear.
We now show an existence of a corresponding solution $\hat{\bfmu}^j$ by construction.
We define
\begin{equation}
\label{E_C1}
\hat{\mu}^j_{i}(x_i) := \begin{cases}
  \mu^*_{i}(x_i),  & \text{if }i \in \mcI\\
  1-\bar{\mu}_i^j(x_i), & \text{otherwise.}
\end{cases}
\end{equation}
for each $i \in \mcV_j$
and
\begin{equation}
\label{E_C2}
\hat{\mu}^j_{i,k}(x_i, x_k) := \begin{cases}
  \mu^*_{i,k}(x_i, x_k),  & \text{if }\mu^*_{i,k}(x_i, x_k) \in \{0, 1\}\\
  1-\bar{\mu}_{i,k}^j(x_i, x_k), & \text{if } \mu^*_{i,k}(x_i, x_k) = 0.5
\end{cases}
\end{equation}
for each $(i,k) \in \mcE_j$.
It is easy to see that the above definition of $\hat{\bfmu}^j$ satisfies the equation (\ref{E_average}).
Furthermore, $\hat{\bfmu}^j$ lies on the line through $\bar{\bfmu}^j$ and the restriction $\bfmu^*|_{\mcT_j}$
and is therefore optimal.
We now show that it is feasible, that is, $\hat{\bfmu}^j \in \mcX_{\mcT_j}$.
Note that the variables $\hat{\bfmu}^j$ are either equal to the variables in $\bfmu^*_{\mcI}$ or
have an opposite value to the variables in $\bar{\bfmu}^j$. Therefore, they inherit 
the integrality constraints as well as the normalisation constraints from $\bfmu^*$ and $\bar{\bfmu}^j$.
Similar argument can be used for the marginalisation constraints.
More precisely, for the integral edges, where both end nodes are integral,
the marginalisation constraints hold true. We only need to check the cases with non integral edges.
This can be done by considering all the cases listed in Lemma \ref{L_frac_sol}.
We show exemplary one case -- the remaining cases are straightforward.
In the following we drop the superscript $j$ denoting the subproblem and write only $\hat{\bfmu}$ and $\bar{\bfmu}$.
Consider the case (a) from Lemma \ref{L_frac_sol}. First, we have
$\mu_{i,j}^*(0,0) = \mu_{i,j}^*(1,1) = 0.5$ and $\mu_{i,j}^*(0,1) = \mu_{i,j}^*(1,0) = 0$.
That is, $\bar{\mu}_{i,j}(0,1) = \bar{\mu}_{i,j}(1,0) = 0$. Without loss of generality assume
$\bar{\mu}_{i,j}(0,0) = 1$ and $\bar{\mu}_{i,j}(1,1) = 0$, that is, $\bar{\mu}_{i}(0) = \bar{\mu}_{j}(0) = 1$
and $\bar{\mu}_{i}(1) = \bar{\mu}_{j}(1) = 0$. Due to the construction in (\ref{E_C1})
we get $\hat{\mu}_{i}(0) = \hat{\mu}_{j}(0) = 0$ and $\hat{\mu}_{i}(1) = \hat{\mu}_{j}(1) = 1$.
This corresponds to $\hat{\mu}_{i,j}(0,0) = \hat{\mu}_{i,j}(0,1) = \hat{\mu}_{i,j}(1,0) = 0$ and $\hat{\mu}_{i,j}(1,1) = 1$
which is exactly what we get from construction in (\ref{E_C2}).
Therefore, we get a valid labelling of an edge and the marginalisation constraints are satisfied.
Finally, note that the equality
$\bfmu^*|_{\mcT_j} = \frac{1}{2}(\bar{\bfmu}^j + \hat{\bfmu}^j)$
also holds (including the edge marginals).

$\Box$

\section{Proof of Theorem \ref{T_2}}
Let $\bfmu^*$ be a unique optimal solution of the LP relaxation (\ref{OP1}).
We  use the notation $\bfmu^*|_{\mcT_j}$ to denote a reduction of $\bfmu^*$ to a corresponding subtree.
Theorem \ref{T_1} guarantees an existence of minimisers $\bar{\bfmu}^1 \in \mcX_{\mcT_1}, ..., \bar{\bfmu}^m \in \mcX_{\mcT_m}$
of a corresponding Lagrangian $\mcL(\cdot, ...,  \cdot, \bar{\bfu})$ where each $\bar{\bfmu}^j$ agrees with the integral part of $\bfmu^*|_{\mcT_j}$.
We now assume that there is another minimiser $\hat{\bfmu}^j$ for the $j$-th subproblem  with $\bar{\bfmu}^j_i(x_i) \neq \hat{\bfmu}^j_i(x_i)$
for some $i \in \mcI$, $x_i \in S$ and show that this assumption leads to a contradiction.
We do this by constructing another optimal solution of the LP relaxation different from $\bfmu^*$.

Assume for simplicity a decomposition over individual edges.
We consider the following relabelling procedure starting with an edge $(i,k)$
corresponding to the $j$-th subproblem above.
Since $\hat{\bfmu}^j$ and $\bar{\bfmu}^j$ both are minimisers for the corresponding subproblem,
the average $\tilde{\bfmu}^j := \frac{1}{2}(\hat{\bfmu}^j + \bar{\bfmu}^j)$ is also a minimiser (because the objective is linear)
and $\tilde{\bfmu}^j_i(x_i) = 0.5$ for $x_i \in S$.
That is, the $i$-th node is now assigned with a fractional label $0.5$.
The remaining nodes $x_r$ ($r \neq k$) adjacent  to $x_i$ can be relabelled in a consistent way to $x_i = 0.5$ such that a corresponding assignment $(0.5, x_r)$ is optimal for the edge $(i, r)$
by using the weak tree agreement property\footnote{Optimal assignments obtained via DD are known to satisfy the weak tree agreement (WTA) condition \cite{KomodakisPT11}.
In particular, for our purposes we use the following fact. Consider any two trees $\mcT_i$ and $\mcT_j$ which share a node $x_k$. Then for
any optimal configuration $\bfmu^i$ there exists an optimal configuration $\bfmu^j$ with $\mu^i_k(x_k) = \mu^j_k(x_k)$.}.
Namely, since there are two optimal assignments for the edge $(i,k)$ with both values for $i$,
for every adjacent edge $(i,k)$ there must also be optimal assignments with both values for $i$.
Therefore, we can define a new labelling for each edge adjacent to $i$ by computing the average of the corresponding assignments.
During this procedure some nodes $x_k$ can change their label.
Note that this is possible only for nodes with integral value in $\bfmu^*$.
To validate this claim 
consider a fractional node $x_k$. Since $x_i$ is integral in $\bfmu^*$,
there must be (due to lemma \ref{C_1}) optimal assignments $(x_i^*,0)$ and $(x_i^*,1)$ for edge $(i,k)$,
where $x_i^*$ is the optimal label of $x_i$ according to $\bfmu^*$.
Furthermore, because of $x_i = 0.5$ (due to relabelling $\tilde{\bfmu}^j$)
there also must be an optimal assignment for that edge of the form $(1-x^*,0)$ or $(1-x^*,1)$. In any case we can find an optimal average such that
$x_i = x_k = 0.5$. That is, the value of fractional $x_k$ does not change!

If a node $x_k$ changes his label to $0.5$ during this procedure, we then need to consider all its neighbours (except $x_i$) and proceed with the relabelling process.
More precisely, we have the following cases:\\
We now consider an edge $(i,k)$ where $x_i$ has been relabelled to $0.5$ in previous steps.\\
Case 1: $I \rightarrow I$\\
That is,  $x_i$ and $x_k$ both have an integral value in $\bfmu^*$.\\
$(a)$ $x_k$ does not change by computing a corresponding average, then there is nothing more to do.\\
$(b)$ $x_k$ changes. We label it with $0.5$ and
consider all adjacent cases (except $x_i$).\\
Case 2: $I \rightarrow F$\\
That is, $x_i$ is integral in $\bfmu^*$ and $x_k$ is fractional.
There must be (due to lemma \ref{C_1})
optimal assignments $(x_i^*, 0)$ and $(x_i^*, 1)$.
Because $x_i = 0.5$ now there must be (due to WTA) an optimal assignment $(1-x_i^*,0)$ or $(1-x_i^*,1)$
such that a corresponding optimal average results in $x_i = x_k = 0.5$.
So the label of $x_k$ does not change.\\
Case 3: $I \rightarrow I/F$\\
That is, $x_k$ has an integral value in $\bfmu^*$ but has been relabelled to $0.5$ previously.
Due to the WTA there are always assignments such that a corresponding average results in $x_i = x_k = 0.5$.

\noindent Since only integral nodes can change their label during the above relabelling procedure, there are no other cases to consider.
The relabelling procedure terminates with a new consistent jont labelling $\tilde{\bfmu}$ different from $\bfmu^*$.
We can prove the statement for arbitrary tree decompositions (not only over edges) by using similar arguments.

$\Box$

\section{On the fractional solutions of LP relaxation}
For binary pairwise MRFs the LP relaxation has the property that  in every (extreme) optimal solution
each fractional node is half integral \cite{Deza97, Sontag_thesis10}.
Furthermore, each edge marginal is either integral or has fractional values.
More precisely, an edge marginal is integral only if both end nodes are integral.
In fact, there are six further cases for fractional edge marginals
as specified in the following lemma.
\begin{lemma}
\label{L_frac_sol}
Let $\bfmu \in L_{G}$ be an extreme point. Then each edge marginal $\mu_{i,j}(x_i, x_j)$ is either integral (if 
both end nodes $x_i$ and $x_j$ are integral) or\\ 
(a) is equal to
\begin{center}
$\begin{array}{|c|c|c|}
  \hline
  \mu_{i,j}(x_i, x_j) & x_j = 0 & x_j = 1\\
  \hline
  x_i = 0 & 0.5 & 0\\
  \hline
  x_i = 1 & 0 & 0.5\\
  \hline
\end{array}$
\hspace*{10pt}or\hspace*{10pt}
$\begin{array}{|c|c|c|}
  \hline
  \mu_{i,j}(x_i, x_j) & x_j = 0 & x_j = 1\\
  \hline
  x_i = 0 & 0 & 0.5\\
  \hline
  x_i = 1 & 0.5 & 0\\
  \hline
\end{array}$
\end{center}
if both $x_i$ and $x_j$ are fractional;\\
(b) is equal to
\begin{center}
$\begin{array}{|c|c|c|}
  \hline
  \mu_{i,j}(x_i, x_j) & x_j = 0 & x_j = 1\\
  \hline
  x_i = 0 & 0.5 & 0\\
  \hline
  x_i = 1 & 0.5 & 0\\
  \hline
\end{array}$
\hspace*{10pt}or\hspace*{10pt}
$\begin{array}{|c|c|c|}
  \hline
  \mu_{i,j}(x_i, x_j) & x_j = 0 & x_j = 1\\
  \hline
  x_i = 0 & 0 & 0.5\\
  \hline
  x_i = 1 & 0 & 0.5\\
  \hline
\end{array}$
\end{center}
if $x_i$ is fractional and $x_j$ is integral ($x_j = 0$ on the left and $x_j = 1$ on the right);\\
(c) is equal to
\begin{center}
$\begin{array}{|c|c|c|}
  \hline
  \mu_{i,j}(x_i, x_j) & x_j = 0 & x_j = 1\\
  \hline
  x_i = 0 & 0.5 & 0.5\\
  \hline
  x_i = 1 & 0 & 0\\
  \hline
\end{array}$
\hspace*{10pt}or\hspace*{10pt}
$\begin{array}{|c|c|c|}
  \hline
  \mu_{i,j}(x_i, x_j) & x_j = 0 & x_j = 1\\
  \hline
  x_i = 0 & 0 & 0\\
  \hline
  x_i = 1 & 0.5 & 0.5\\
  \hline
\end{array}$
\end{center}
if $x_j$ is fractional and $x_i$ is integral ($x_i = 0$ on the left and $x_i = 1$ on the right);
\label{L_frac_sol}
\end{lemma}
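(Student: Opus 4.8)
The plan is to reduce the claim, for a single edge at a time, to a description of the extreme points of a $2\times 2$ transportation polytope. Observe that in the definition of $L_G$ the four entries of a fixed edge marginal $\mu_{i,j}(\cdot,\cdot)$ occur only in the two marginalisation constraints of the edge $(i,j)$ and in their own non-negativity; no other constraint mentions them, and the node marginals $\mu_i,\mu_j$ appear in no constraint besides their (fixed) normalisation. Consequently, if $\bfmu$ is an extreme point of $L_G$, the matrix $M=(\mu_{i,j}(x_i,x_j))_{x_i,x_j\in\{0,1\}}$ must be an extreme point of
\[
T(\mu_i,\mu_j):=\Big\{M\in\mathbb{R}^{2\times2}\ :\ M\geqslant 0,\ \textstyle\sum_{x_j}M(x_i,x_j)=\mu_i(x_i),\ \sum_{x_i}M(x_i,x_j)=\mu_j(x_j)\Big\}.
\]
Otherwise, writing $M=\tfrac12(M_1+M_2)$ with $M_1\neq M_2$ in $T(\mu_i,\mu_j)$ and leaving every other coordinate of $\bfmu$ untouched would exhibit $\bfmu$ as the midpoint of two distinct points of $L_G$, contradicting extremality. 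So it suffices to enumerate the vertices of $T(\mu_i,\mu_j)$.

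Next I would invoke the known half-integrality of node marginals at an extreme point of $L_G$ for a binary pairwise MRF \cite{Deza97, Sontag_thesis10}: each of $\mu_i,\mu_j$ is either an integral indicator vector or equals $(\tfrac12,\tfrac12)$. This yields exactly the four cases of the statement according to which of $x_i,x_j$ is fractional. In every case $T(\mu_i,\mu_j)$ is a (possibly degenerate) $2\times2$ transportation polytope; since fixing a single entry, say $M(0,0)=t$, determines the whole matrix through the prescribed row and column sums, this polytope is at most one-dimensional and its extreme points are immediate to list.

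Finally I would carry out the (mechanical) case check and read off the tables. If $x_i,x_j$ are both integral, $T(\mu_i,\mu_j)$ is the single integral matrix with a $1$ at $(x_i,x_j)$. If exactly one is fractional, say $\mu_i=(\tfrac12,\tfrac12)$ and $\mu_j$ integral with $\mu_j(0)=1$, the column sums force $M(\cdot,1)=0$ and then the row sums force $M(0,0)=M(1,0)=\tfrac12$; thus $T$ is again a singleton, giving the tables of case~(b) (and of case~(c) after exchanging the roles of $i,j$). If both $x_i,x_j$ are fractional, then $M(0,0)=M(1,1)=t$, $M(0,1)=M(1,0)=\tfrac12-t$ for $t\in[0,\tfrac12]$, a segment whose two endpoints $t=0$ and $t=\tfrac12$ are precisely the anti-diagonal and diagonal half-integral matrices of case~(a), and these are its only vertices. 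The only step that requires genuine care is the initial reduction — checking that perturbing a single edge block keeps the point inside $L_G$ — which is transparent from the block structure of the constraints noted above; the remaining case enumeration, though it branches, is entirely routine bookkeeping.
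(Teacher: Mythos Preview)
Your argument is correct and is essentially the paper's: both observe that the four entries of a single edge block can be perturbed within $L_G$ subject only to the marginalisation constraints of that edge and nonnegativity, so extremality of $\bfmu$ forces the block to be a vertex of the $2\times2$ transportation polytope $T(\mu_i,\mu_j)$; the paper then carries out the $\pm\epsilon$ perturbation explicitly for case~(a) and defers (b),(c) to ``similar arguments,'' while you package the same reduction abstractly and enumerate the vertices of $T$ in all cases. One small slip that does not affect the proof: the claim that ``the node marginals $\mu_i,\mu_j$ appear in no constraint besides their (fixed) normalisation'' is false---they also occur in the marginalisation constraints of every other edge incident to $i$ or $j$---but since you hold them fixed when perturbing $M$, the reduction to $T(\mu_i,\mu_j)$ goes through unchanged.
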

\begin{proof}
The integral case is clear.
We now assume that a given edge is non integral, that is, at least one of the nodes is fractional.
First we show that in every case a matrix corresponding to an edge assignment contains only two different values $a$ and $b$.\\
Case (a):\\
Since every feasible solution $\bfmu \in L_{G}$ is subject to the marginalisation constraints $ \sum_{x_i} \mu_{i,j}(x_i, x_j) = \mu_j(x_j)$ 
and $\sum_{x_j} \mu_{i,j}(x_i, x_j) = \mu_i(x_i)$
the following equations must hold
\begin{align}
\label{4E}
\begin{split}
 \mu_{i,j}(0,0) + \mu_{i,j}(0,1) &= \mu_i(0)\\
 \mu_{i,j}(1,0) + \mu_{i,j}(1,1) &= \mu_i(1)\\
 \mu_{i,j}(0,0) + \mu_{i,j}(1,0) &= \mu_j(0)\\
 \mu_{i,j}(0,1) + \mu_{i,j}(1,1) &= \mu_j(1) 
\end{split}
\end{align}
Due to $\mu_i(0) = \mu_i(1) = \mu_j(0) = \mu_j(1) = 0.5$ it follows from (\ref{4E})
that $a:= \mu_{i,j}(0,0) = \mu_{i,j}(1,1)$ and $b:= \mu_{i,j}(0,1) = \mu_{i,j}(1,0)$.
Now we argue that $a, b \in \{0, 0.5\}$.
For this purpose assume that the edge marginal $\mu_{i,j}(x_i,x_j)$
contains other than half-integral values. So w.l.o.g. let $a \in (0, 0.5)$,
then also  $b \in (0, 0.5)$ (otherwise $a+b \neq 0.5$).
We now define two different feasible solutions $\bfmu^1$ and $\bfmu^2$ which have the same entries
as $\bfmu$ except the entries for the marginal $\mu_{i,j}(x_i,x_j)$,
which we define for $\bfmu^1$ by $a_1 := a + \epsilon$,  $b_1 := b - \epsilon$
and for $\bfmu^2$ by $a_2 := a - \epsilon$ and $b_2 := b + \epsilon$,
where $\epsilon$ is small enough such that $a_1, a_2, b_1, b_2 \in (0, 0.5)$.
Furthermore, due to $a_1+b_1 = a_2+b_2 = 0.5$ a corresponding edge assignment is feasible,
and therefore the solutions $\bfmu^1$, $\bfmu^2$.
Since $\bfmu = \frac{1}{2}(\bfmu^1 + \bfmu^2)$, the solution $\bfmu$ is 
a convex combination of two different feasible solutions, and is therefore not extreme contradicting our assumption
that $\bfmu$ is a corner of the local polytope. So it must hold $a, b \in \{0, 0.5\}$.
Finally, $a = 0$ implies $b = 0.5$ and vice versa due to $a+b=0.5$.
The remaining cases in $(b)$ and $(c)$ can be dealt with by using similar arguments as above.
\end{proof}

\end{document}